%% file: main.tex
\documentclass[runningheads]{llncs}
\input{packages.tex}

\input{mysetting_article.tex}

\input{definitions.tex}
\usepackage[T1]{fontenc}
\begin{document}
\title{Effectful Toposes and Their Lawvere-Tierney Topologies}
\author{Rinta YAMADA}
\author{Rinta Yamada\inst{1,2} \orcidlink{0009-0009-5927-8684}}

\authorrunning{R. Yamada}
\institute{The University of Electro-Communications, Chofu, Tokyo, Japan
\and Keio University, Minato, Tokyo, Japan \\
\email{rinta.yamada.cs@gmail.com}}
\maketitle              
\begin{abstract}
This paper introduces {effectful toposes} as an extension of the effective topos and investigates their structure relative to {Lawvere-Tierney topologies}.
First, we formulate effectful toposes by lifting the evidenced frame, which is a recently proposed model for effectful computation.
Next, we define Lawvere-Tierney topologies on effectful toposes and characterize the sheaves on them.
In the effective topos, Lawvere-Tierney topologies are known to correspond to computational oracles.
Finally, to demonstrate that our result contributes to the connection between realizability relativized by oracles and effects,
we show that the sheaf topos for the double negation topology is isomorphic to the effectful topos with the Continuation-Passing-Style effect;
these toposes serve as a model of classical realizability.

\keywords{effectful computation \and Lawvere-Tierney topology \and evidenced frame \and effective topos \and oracle computation \and intermediate logic}
\end{abstract}

\input{section/introduction.tex}

\input{section/ef_to_topos.tex}

\input{section/mca_k1.tex}

\input{section/topology.tex}

\input{section/conclusion.tex}

\begin{credits}
\subsubsection{\ackname} 
I would like to thank my supervisor, Prof.\ Yoshio Okamoto, for his guidance and support.
I also thank Assoc.\ Prof.\ Ryota Akiyoshi for his helpful comments on the manuscript.

\subsubsection{\discintname}
The author has no competing interests to declare that are relevant to the content of this article.
\end{credits}

\newpage

\appendix
\renewcommand{\thesection}{Appendix \arabic{section}}
\input{section/appendix.tex}

%
%
%
\bibliographystyle{splncs04}
\bibliography{etltt}

\end{document}

%% file: packages.tex
\usepackage{orcidlink}
\usepackage{amsmath,amsfonts,amssymb,bm,ascmac}
\usepackage{physics}
\usepackage{tikz}
\usetikzlibrary{intersections,calc,arrows.meta,cd,automata,positioning,shapes.geometric}
\usepackage{braket,enumerate,mathrsfs,stmaryrd}
\usepackage{url}
\usepackage[normalem]{ulem}
\usepackage{listings}
\usepackage{bussproofs}
\usepackage{mathtools}
\usepackage{cite}
\usepackage{docmute}
\usepackage{upgreek}
\usepackage{enumitem}
\usepackage{bbm}
\usepackage{caption}
\usepackage{color}

\usepackage{hyperref}
\hypersetup{%
 setpagesize=false,%
 bookmarks=true,%
 bookmarksdepth=tocdepth,%
 bookmarksnumbered=true,%
 colorlinks=true,%
 linkcolor=blue,
 citecolor=blue,
 urlcolor=blue,
 pdftitle={},%
 pdfsubject={},%
 pdfauthor={},%
 pdfkeywords={}}


%% file: mysetting_article.tex
\DeclareMathOperator{\Natural}{{\mathbb{N}}}

\newcommand{\id}{\mathrm{id}}

\newcommand{\CSet}{{\mathord{\mathbf{Set}}}}

\DeclarePairedDelimiter{\encode}{\langle}{\rangle}

\newcounter{general}[section]

%% file: definitions.tex
\spnewtheorem{construction}[definition]{Construction}{\bfseries}{\itshape}

\newcommand{\orthog}{\mathbin{\bot}}
\newcommand{\kexps}{{E(\Natural)}}
\newcommand{\EFkleene}{\EF_{\kleenek}}
\newcommand{\Char}{\mathord{\mathrm{Ch}}}
\newcommand{\Sheaf}{\mathop{\mathrm{Sh}}}

\newcommand{\pinc}{\mathop{\mathtt{inc}}}
\newcommand{\pidm}{\mathop{\mathtt{idm}}}
\newcommand{\pprs}{\mathop{\mathtt{prs}}}

\newcommand{\psep}{\mathtt{sep}}

\newcommand{\pdsc}{\mathtt{dsc}}

\newcommand{\psym}{\mathop{\mathtt{sym}}}

\newcommand{\ptrs}{\mathop{\mathtt{trs}}}

\newcommand{\pext}{\mathop{\mathtt{ext}}}

\newcommand{\pfext}{\mathop{\mathtt{fext}}}

\newcommand{\psv}{\mathop{\mathtt{sv}}}

\newcommand{\ptot}{\mathop{\mathtt{tot}}}

\newcommand{\peq}{\mathop{\mathtt{eq}}}

\newcommand{\pleq}{\mathop{\mathtt{leq}}}

\newcommand{\EffHOL}{{\mathord{\mathsf{EffHOL}}}}
\newcommand{\EF}{\mathord{\mathcal{E\!F}}}

\newcommand{\evrel}[1]{\mathrel{\xrightarrow{#1}}}

\newcommand{\lift}[1]{{\overline{#1}}}

\newcommand{\modalangle}[3]{ {\langle \!\!\mathrel{\raisebox{-4pt}{$\diamond$}}}\, #1 \leftarrow #2  {\mathrel{\raisebox{-4pt}{$\diamond$}}\!\!\rangle}\,   #3 }
\newcommand{\makleene}[3]{ {\langle \!\!\mathrel{\raisebox{-4pt}{$\diamond$}}}\, #1 \leftarrow #2  {\mathrel{\raisebox{-4pt}{$\diamond$}}\!\!\rangle}_{\kleenek}\,   #3 }
\newcommand{\macps}[3]{ {\langle \!\!\mathrel{\raisebox{-4pt}{$\diamond$}}}\, #1 \leftarrow #2  {\mathrel{\raisebox{-4pt}{$\diamond$}}\!\!\rangle}_{\cps}\,   #3 }

\newcommand{\UFam}{\mathsf{UFam}}

\newcommand{\efequiv}{\leftrightsquigarrow}
\newcommand{\efequivop}{\mathop{\leftrightsquigarrow}}

\DeclarePairedDelimiter{\eqclass}{\lbrack}{\rbrack}
\newcommand{\rsex}{\mathsf{ex}}

\newcommand{\singleton}{{\mathord{\mathbbm 1}}}
\newcommand{\boolhyt}{{\mathord{\mathbbm 2}}}

\newcommand{\true}{\mathop{\mathsf{true}}}
\newcommand{\tripc}[1]{\mathsf{hold}_{#1}}

\newcommand{\cps}{{\mathsf{CPS}}}
\newcommand{\dnt}{{\lnot\lnot}}
\newcommand{\algebracps}{{\mathcal K_\cps}}
\newcommand{\codeS}{e_{\mathtt{S}}}
\newcommand{\codeK}{e_{\mathtt{K}}}
\newcommand{\ethrow}[1]{e_{\mathsf{K}_{#1}}}
\newcommand{\pden}{\mathtt{den}}

\newcommand{\algebraA}{{\mathord{\mathbb{A}}}}
\newcommand{\join}{\mathbin{\sqcup}}

\newcommand{\meet}{\mathbin{\sqcap}}
\newcommand{\bigmeet}{\mathop{\bigsqcap}}
\newcommand{\himp}{\mathbin{\sqsupset}}
\newcommand{\modality}{\mathord{\diamondsuit}}
\newcommand{\htop}{\mathord{\bm{1}}}
\newcommand{\hbot}{\mathord{\bm{0}}}

\newcommand{\kleenek}{{\mathord{\mathcal{K}_1}}}
\newcommand{\effomega}{\boolhyt^{\Natural}}

\newcommand{\CatHp}{\mathbf{Hp}}
\newcommand{\tripP}{\mathsf{P}}

\newcommand{\tripET}{\mathsf{ET}}

\newcommand{\toposE}{\mathcal{E}}

\newcommand{\eftopos}{\mathrm{EFT}}
\newcommand{\toposEff}{\mathord{\eftopos(\EFkleene)}}

\def\BaseUrl{https://dominik-kirst.github.io/mca/effhol/}
\newcommand{\coqdoc}[1]{\href{\BaseUrl#1}{\raisebox{-.9mm}{\includegraphics[height=1em]{coql.png}}}}

\DeclareMathAlphabet{\sys}{OT1}{cmss}{sbc}{n}

\newcommand{\separator}{\mathcal{S}}

\newcommand{\opcat}[1]{#1^{\text{op}}}

\newcommand{\autorule}[1]{\relax\ifmmode{\scriptstyle({#1})}\else$(#1)$\fi}

\newcommand{\callcc}{\ensuremath{\mathtt{call/cc}}}
\newcommand{\throw}[2]{\ensuremath{\mathtt{throw}_{#2}^{#1}}}

\let\rq\relax
\let\lq\relax
\DeclareMathSymbol{\lq}{\mathord}{operators}{``}
\DeclareMathSymbol{\rq}{\mathord}{operators}{`'}

\newcommand{\pole}{\bot\!\!\!\bot}
\newcommand{\setpl}{\mathrm{PL}}

\DeclareMathSymbol{\mlq}{\mathord}{operators}{``}
\DeclareMathSymbol{\mrq}{\mathord}{operators}{`'}

\newcommand{\cc}{\mathsf{CC}}

\newcommand{\emem}{e_{\cc}}

\newcommand{\llangle}{\langle\!|}
\newcommand{\rrangle}{|\!\rangle}

\newcommand{\eid}{e_{\mathtt{id}}}

\newcommand{\ecompos}[2]{#1 \mathop{;} #2}

\newcommand{\etrue}{e_{\scriptscriptstyle\top}}

\newcommand{\epair}[2]{\llangle #1, #2 \rrangle}

\newcommand{\efst}{e_{\mathtt{fst}}}

\newcommand{\esnd}{e_{\mathtt{snd}}}

\newcommand{\elambda}[1]{\lambda #1}

\newcommand{\eeval}{e_{\mathtt{eval}}}

\newcommand{\imp}{\supset}
\newcommand{\power}{\mathcal{P}}

\newcommand{\progset}{\mathrm{P}}
\newcommand{\progtypex}[1]{\progset}

%% file: section/introduction.tex
\section{Introduction}
\label{sec:introduction}

Our goal in this paper is to establish the concept of an \emph{effectful topos}, which is an extension of the effective topos that takes into account computational effects,
and to characterize the Lawvere-Tierney topologies and sheaves thereon.
The {effective topos} constitutes the most fundamental framework in realizability theory for analyzing constructive mathematics via computational methods.
While realizability involving computational effects such as monads has been actively studied in recent years,
attempts to lift these developments to topos theory remain insufficient.
In addition to effects, another way to extend computation is through oracles.
As we will discuss later, studying Lawvere-Tierney topologies on effectful toposes
is expected to play a pivotal role in providing a unified connection between realizability relativized by oracles and by effects.

The theory of realizability provides a vital connection between constructive mathematics and computability theory.
The seminal work of Kleene \cite{kleene_interpretation_1945} developed ``effective proofs'' for sentences in Heyting arithmetic using recursive functions,
thereby allowing for a formulation of the Brouwer-Heyting-Kolmogorov (BHK) interpretation \cite{kolmogoroff_zur_1932,heyting_intuitionism_1956} in constructivism.
Moreover, Hyland \cite{hyland_effective_1982} lifted realizability to the framework of topos theory with his discovery of the effective topos constructed from Kleene's model $\kleenek$.
Traditionally, such realizability models are based on partial combinatory algebras (PCAs) \cite{feferman_language_1975, hofstra_partial_2004, van_oosten_partial_2008},
which abstractly represent the concepts of computability and function application.

Recently, realizability involving computational effects has been actively studied.
While standard PCA-based models represent pure computations,
they often lack the flexibility to handle {effects}, such as non-determinism, state, or exceptions.
To bridge this gap, the frameworks of {evidenced frames} \cite{cohen_evidenced_2021} and {monadic combinatory algebras (MCAs)} \cite{cohen_partial_2025} were introduced.
With these frameworks, it has become possible to analyze the relationship between monads and the logic they realize in a unified manner.

However, efforts to translate these approaches into the language of topos theory have been limited.
Cohen et al. \cite{cohen_evidenced_2021} have clarified the relationship between evidenced frames and triposes \cite{hyland_tripos_1980}.
By combining these with the standard topos construction, we can obtain realizability toposes induced from evidenced frames. We call these structures effectful toposes in this paper.
Despite the logical utility of evidenced frames, the properties of the effectful toposes have not yet been fully explored.

In particular, the interplay between these effectful structures and Lawvere-Tierney topologies \cite{lawvere_quantifiers_1970, tierney_sheaf_1972} remains an open area of investigation.
Kihara \cite{kihara_lawvere-tierney_2023} demonstrated that Lawvere-Tierney topologies on the standard effective topos correspond to oracle computations.
Establishing a similar correspondence in the context of effectful toposes would provide a categorical account of the connection between computational effects and oracles.

In this paper, we investigate the foundations of effectful toposes with a focus on their topological structure. Our main contributions are as follows.
\begin{itemize}
    \item Construction \ref{const:EFT} formalizes the ``evidenced frame-to-topos'' construction
        using $\UFam$ \cite{cohen_evidenced_2021} and Pitts' constructions \cite{hyland_tripos_1980,pitts_theory_1981}, named $\eftopos$.
    \item Theorem \ref{thm:main_result} identifies a necessary and sufficient condition for the sheaf topos induced by a topology
        on the effectful topos isomorphic to the effective topos.
\end{itemize}
Finally, Section \ref{sec:conclusion} concludes with a discussion on future challenges towards connecting effectful realizability and oracle realizability.

%% file: section/ef_to_topos.tex
\section{Effectful Topos}
\label{sec:ef_to_topos}
\setcounter{general}{1}

In this section, we demonstrate how a realizability topos can be constructed from an evidenced frame.

Introduced by Cohen et al. \cite{cohen_evidenced_2021}, an evidenced frame is a framework that abstracts the general properties of the BHK interpretation
to provide a unified system of propositions and evidences.
Consequently, this allows us to construct models based on syntactic calculi,
rather than being restricted to specific algebraic structures like PCAs.
We begin with its definition.

\begin{definition}[Evidenced Frames {\cite[Def. III.1]{cohen_evidenced_2021}}]
    \label{dfn:evidenced_frame}
    A triple $(\Phi, E, \cdot \evrel{\cdot} \cdot)$ consisting of a set $\Phi$ of \emph{propositions}, a set $E$ of \emph{evidences},
    and an \emph{evidence relation} $\phi \evrel{e} \psi$ on $\Phi \times E \times \Phi$, is called an \emph{evidenced frame}
    if it is equipped with the constructs of Table \ref{tab:ef_definition}.
    A proposition $\phi$ is said to be \emph{evidenced} by an evidence $e$ if $\top \evrel{e} \phi$ holds.
    In that case, we say that $\phi$ is \emph{evidenceable}.
\end{definition}

\begin{table}[t]
    \centering
    \scalebox{0.9}{
    \input{figures/evidenced_frame_definition.tex}}
    \caption{Evidenced frame constructs \cite[Fig. 2]{cohen_partial_2025} where lowercase Greek letters are propositions and $\Psi \subseteq \Phi$.}
    \label{tab:ef_definition}
\end{table}

Crucially, the framework generalizes the notion of computational truth through the {evidenceability}, i.e., whether a proposition is evidenceable or not.
This concept serves as the counterpart to standard realizability. 

We also use the following abbreviations:
\begin{align*}
    \phi \imp \psi &:= \phi \imp \qty{\psi},\quad \phi \efequiv \psi := (\phi \imp \psi) \land (\psi \imp \phi), \\
    \prod \Psi &:= \top \imp \Psi, \quad \bot := \prod\Phi, \\
    \coprod \Psi &:= \prod\Set{ \prod \Set{\psi \imp \phi \mid \psi \in \Psi } \imp \phi | \phi \in \Phi }.
\end{align*}
Of course, we can intuitively regard $\prod \Psi$ as ``all $\psi \in \Psi$ hold,'' and $\coprod \Psi$ as ``some $\psi \in \Psi$ holds.''
In particular, for a set $X$ and a predicate $\phi \colon X \to \Phi$, we write as
\begin{align*}
    \prod_{x \in X}\phi(x) := \prod \Set{\phi(x) | x \in X}, \quad
    \coprod_{x \in X}\phi(x) := \coprod \Set{\phi(x) | x \in X}.
\end{align*}

\begin{remark}
    We frequently employ the deduction theorem as a working principle:
    assume that $\phi \imp \psi$ is evidenced by $e$. If $\exists e'. \top \evrel{e'} \phi$ also holds, then
    \begin{equation*}
        \top \evrel{e'}
        \phi \evrel{\epair{e_\top}{\eid}} \top \land \phi
        \evrel{\epair{\efst;e}{\esnd}} (\phi \imp \psi) \land \phi
        \evrel{\eeval} \psi.
    \end{equation*}
    Conversely, suppose that $\phi$ is evidenceable. If we obtain an evidence of $\psi$ in an ``effectful'' manner relative to the base evidenced frame, then $\phi \imp \psi$ is also evidenceable.
    While it is difficult to verify whether a meta-level argument meets the requirements, it is sufficient if the argument is at least constructive.
\end{remark}

Although we can derive a topos from an evidenced frame using known methods, the downside is that the structure of the resulting topos lacks clarity.
Given an evidenced frame $\EF$, one can obtain a $\CSet$-tripos $\UFam(\EF)$ by the $\UFam$ construction \cite{cohen_evidenced_2021}.
Furthermore, the well-known tripos-to-topos construction \cite{hyland_tripos_1980, pitts_theory_1981}, often called ``Pitts' construction,'' yields a topos $\mathcal C[\tripP]$ from a $\mathcal C$-tripos $\tripP$.
With these constructions we have the topos $\eftopos(\EF) := \CSet[\UFam(\EF)]$ induced by $\EF$.
However, as Pitts' construction characterizes the components of $\mathcal C[\tripP]$ using {the language of $\tripP$},
their meaning in the context of evidenced frames remains obscure.
Therefore, we formulate $\eftopos(\EF)$ directly in terms of evidenced frames.

In this paper,
we denote the meet, join, implication, the greatest and least elements (if they exist) of a Heyting prealgebra as $\meet, \join, \himp, \htop$ and $\hbot$, respectively.
Let $\CatHp$ be the preorder-enriched category of Heyting prealgebras and structure-preserving maps ordered pointwise.
For a c.c.\ category $\mathcal C$, a pseudofunctor $\tripP \colon \opcat{\mathcal C} \to \CatHp$ is a \emph{$\mathcal C$-tripos}
if it satisfies the following conditions \cite{van_oosten_realizability_2008}.
\begin{enumerate}
    \item For every $f \colon X \to Y$ of $\opcat{\mathcal C}$, the morphism $\tripP(f) \colon \tripP(Y) \to \tripP(X)$ that is also written as $f^*$
        has left/right adjoints $\exists_f/\forall_f$ in the category of preorders.
    \item For every pullback square
        \begin{equation*}
            \begin{tikzcd}
                X \arrow[r, "f"] \arrow[d, "g"'] & Y \arrow[d, "h"] \\
                Z \arrow[r, "k"'] & W
            \end{tikzcd}
        \end{equation*}
        in $\mathcal C$, the compositions $\forall_f \circ g^*$ and $h^* \circ \forall_k$ are isomorphic.
    \item There is a \emph{generic element}: a pair of $\Sigma \in \mathcal C$ and $\sigma \in \tripP(\Sigma)$ with the property that,
        for any $X \in \mathcal C$ and any $\phi \in \tripP(X)$,
        there is a morphism $\tripc{\phi} \colon X \to \Sigma$ making $\phi \simeq \tripc{ \phi }^*(\sigma)$ in $\tripP(X)$.
\end{enumerate}

For an evidenced frame $\EF = (\Phi, E, \cdot \evrel{\cdot}\cdot)$, the $\UFam$ construction gives rise to the $\CSet$-triposes $\UFam(\EF)$ \cite[Thm. V.5]{cohen_evidenced_2021}.
This can be explicitly described as follows.
For a set $X$ and predicates $\phi, \psi \colon X \to \Phi$, if we define the preorder $\le_X$ by
\begin{math}
    \phi \le_X \psi := \exists e \in E. \forall x \in X.\ \phi(x) \evrel{e} \psi(x),
\end{math}
then the pair $(\Phi^X, \le_X)$ forms a Heyting prealgebra.
One determines the mapping $\UFam(\EF) \colon \opcat{\CSet} \to \CatHp$ as
\begin{equation*}
    \UFam(\EF)(X) = \Phi^X, \quad \UFam(\EF)\qty(f \colon X \to Y) = \lambda \phi \in \Phi^Y.\ \phi \circ f.
\end{equation*}
This is a $\CSet$-tripos whose adjoints for $f \colon X \to Y$ and generic element are given by the following:
\begin{equation*}
    \begin{aligned}
        \exists_f &= \lambda\phi \in \Phi^Y. \lambda y \in Y.\ \coprod \Set{\phi(x) | x \in X \land f(x) = y}, \\
        \forall_f &= \lambda\phi \in \Phi^Y. \lambda y \in Y.\ \prod \Set{\phi(x) | x \in X \land f(x) = y}, \\
        \Sigma &= \Phi, \quad \sigma = \id_\Phi, \quad \tripc{ \phi } = \phi \ \text{for any $X$ and $\phi \in \Phi^X$.}
    \end{aligned}
\end{equation*}

Note that $\Phi^X$ is not necessarily complete, but it does have the greatest and least elements. They are obviously given as $\htop = \lambda x. \top$ and $\hbot = \lambda x. \bot$.
Hence, we can write down the structure of the target topos as follows.
\begin{construction}
    \label{const:EFT}
Let $\EF = (\Phi, E, \cdot \evrel{\cdot} \cdot)$ be an evidenced frame.
The structure $\eftopos(\EF)$ consists of the following components.
\end{construction}

\begin{description}
\item[Objects:]
An object of $\eftopos(\EF)$ is a pair $(X, \sim)$ where $X$ is a set and $\sim$ is an \emph{equality predicate} for $X$:
that is, a binary operator $X \times X \to \Phi$ such that the following are evidenceable:
\begin{alignat*}{1}
    \psym &:= \prod_{x \in X} \prod_{y \in X} (x \sim y \imp y \sim x), \\
    \ptrs &:= \prod_{x \in X} \prod_{y \in X} \prod_{z \in X} (x \sim y \land  y \sim z \imp x \sim y).
\end{alignat*}
We adopt the abbreviations $\rsex(x) := x \sim x$, $\rsex(x, y) := \rsex(x) \land \rsex(y)$, and so on.
An element $x \in X$ is said to be an \emph{existent on $X$} if $\rsex(x)$ is evidenceable.

\item[Morphisms:]
Let $(X, \sim)$ and $(Y, \sim)$ be objects. By abuse of notation, we use the same symbol $\sim$ for the relations for both $X$ and $Y$.
A function $F\colon X \times Y \to \Phi$ is a \emph{functional predicate} $(X, \sim) \to (Y, \sim)$
if the following are evidenceable:
\begin{align*}
    \pext(F) &:= \prod_{x \in X}\prod_{x' \in X} \prod_{y \in Y}\prod_{y' \in Y}(x \sim x' \land y \sim y' \land F(x, y) \imp F(x', y')), \\ 
    \psv(F) &:= \prod_{x \in X} \prod_{y \in Y}\prod_{y' \in Y} (\rsex(x, y, y') \land F(x, y) \land F(x, y') \imp y \sim y'), \\ 
    \ptot(F) &:= \prod_{x \in X} \qty\Big(\rsex(x) \imp \coprod_{y \in Y} (\rsex(y) \land F(x, y))). 
\end{align*}
Moreover, functional predicates $F, G \colon (X, \sim) \to (Y, \sim)$ are \emph{equivalent} if
\begin{equation*}
    \peq(F, G) := \prod_{x \in X} \prod_{y \in Y} (\rsex(x, y) \imp (F(x, y) \efequiv G(x, y)))
\end{equation*}
is evidenceable.
A morphism $(X, \sim) \to (Y, \sim)$ is an equivalence class of functional predicates $(X, \sim) \to (Y, \sim)$ under the equivalence above.
Then, a morphism $[F]$ is \emph{represented} by $F$.

\end{description}

We defer the construction of composition and identities to \ref{sec:appendix_EFT} due to space limitations.

For a function $f \colon X \to Y$ of $\CSet$,
the lifting predicate $\lift{f} \colon X \times Y \to \Phi$ defined by $\lift{f}(x, y) := f(x) \sim y$
is functional if and only if the following is evidenceable:
\begin{equation*}
    \pfext(f) := \prod_{x \in X} \prod_{x' \in X} (x \sim x' \imp f(x) \sim f(x')).
\end{equation*}
In that case, one says that $f$ is \emph{extensional} and that it \emph{represents} the morphism $\eqclass{ \lift{f} } \colon (X, \sim) \to (Y, \sim)$.
Conversely, every morphism $(X, \sim) \to (Y, \sim)$ is represented by some function $X \to Y$ (with the Axiom of Choice).
Hereafter, we often denote a function $f$ and the morphism $\eqclass{\lift{f}}$ by the same $f$.

It is easy to show that $\eftopos(\EF)$ forms a category.
To verify that this is moreover a topos, we need to specify further structure:
namely, that $\eftopos(\EF)$ has a terminal object, finite products, equalizers, and a subobject classifier.

\begin{description}
\item[Terminal object:]
    Letting $\singleton := \qty{*}$ be a singleton set,
    the terminal object is given by $(\singleton, \htop)$.

\item[Subobject classifier:]
    Since $\top$ is also seen as the function $* \mapsto \top$,
    the lifting $\lift{\top}(*, \phi) = \phi \efequiv \top$ represents the morphism $\true \colon (\singleton, \htop) \to (\Phi, \efequivop)$.
    Then, $(\Phi, \efequivop)$ with $\true$ is a subobject classifier.
    That is, for any monomorphism $m \colon (S, \sim) \hookrightarrow (X, \sim)$, there is a unique \emph{classifying morphism} $\chi_m \colon (X, \sim) \to (\Phi, \efequivop)$ making
    $m$ be the pullback of $\true$.
\end{description}
See \ref{sec:appendix_EFT} for details on finite products and equalizers.

\begin{proposition}
    $\eftopos(\EF)$ is an elementary topos.
\end{proposition}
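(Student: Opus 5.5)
The plan is to avoid checking the topos axioms by hand and instead reduce the statement to two known results. By \cite[Thm.~V.5]{cohen_evidenced_2021}, $\UFam(\EF)$ is a $\CSet$-tripos. By the tripos-to-topos theorem of Hyland--Johnstone--Pitts \cite{hyland_tripos_1980,pitts_theory_1981}, for any $\mathcal C$-tripos $\tripP$ the category $\mathcal C[\tripP]$ is an elementary topos. So $\CSet[\UFam(\EF)]$ is a topos, and it remains to show that the structure $\eftopos(\EF)$ of Construction~\ref{const:EFT} is, up to isomorphism of categories, exactly $\CSet[\UFam(\EF)]$. I also need the terminal object, products, equalizers and subobject classifier described above to be the ones transported along this identification.

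\textbf{Step 1.} I would unfold Pitts' construction for $\tripP = \UFam(\EF)$. An object of $\CSet[\tripP]$ is a set $X$ with $\sim\,\in \tripP(X\times X)$ such that the internal symmetry and transitivity formulas hold, i.e.\ are $\le \htop$ in $\tripP(\singleton)$. The interpretation of the tripos language in $\UFam(\EF)$ is computed with the explicit adjoints listed above. Thus $\forall_f$ becomes $\prod$ over fibres, $\exists_f$ becomes $\coprod$ over fibres, $\meet$ becomes $\land$, and $\himp$ becomes $\imp$, pointwise. Also, $\htop \le_{\singleton} \phi$ means precisely that $\phi(*)$ is evidenceable. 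So the validity conditions for objects become evidenceability of $\psym$ and $\ptrs$, and those for functional relations become evidenceability of $\pext(F)$, $\psv(F)$ and $\ptot(F)$. For this to be literally true, the strict-relation clause that Pitts packages into $\pext$ must be carried along or shown redundant. The equivalence relation on functional relations translates to $\peq(F,G)$ in the same way. Composition and identities, deferred to the appendix, are likewise read off as $\coprod_{y}$ and $\sim$.

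\textbf{Step 2.} Since objects, morphisms, composition and identities coincide on the nose, the identity-on-data assignment is an isomorphism of categories $\eftopos(\EF)\cong \CSet[\UFam(\EF)]$. Hence $\eftopos(\EF)$ is an elementary topos.

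\textbf{Step 3.} For completeness I would confirm that the described structures are the standard Pitts ones. The terminal object $(\singleton,\htop)$ is the standard one. The subobject classifier in Pitts' construction is $(\Sigma, \sigma_1\Leftrightarrow\sigma_2)$, which with $\Sigma=\Phi$ and $\sigma=\id_\Phi$ gives $(\Phi,\efequivop)$. The map $\true$ is induced by $\top$. The classifying map of a mono $m$ is induced by $\exists_m$ of the characteristic predicate, i.e.\ $x\mapsto \coprod_{s}(\rsex(s)\land m(s)\sim x)$. Products $(X\times Y, \sim\times\sim)$ and equalizers come from the appendix descriptions.

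The main obstacle is Step 1, the faithful translation. The points to check are the following.
\begin{itemize}
\item Pitts' conditions hold only up to $\simeq$ in $\tripP(\singleton)$.
\item Quantifiers over products of sets must be turned into iterated $\prod$, which requires the Beck--Chevalley condition.
\item Uniformity of a single evidence $e$ across all $x$ must match the evidenceability of one $\prod$-formula; this should follow from the $\prod$-introduction and $\prod$-elimination constructs of Table~\ref{tab:ef_definition}.
\item The tot clause with $\rsex(x)$ must match Pitts' formulation via strictness.
\end{itemize}
I would handle these with a small lemma stating that, for $\phi \in \Phi^{X}$, $\htop\le_X\phi$ iff $\prod_{x\in X}\phi(x)$ is evidenceable. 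The rest is then bookkeeping.
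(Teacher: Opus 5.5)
Your reduction is the route the paper itself relies on. $\eftopos(\EF)$ is \emph{defined} as $\CSet[\UFam(\EF)]$, and Construction~\ref{const:EFT} is offered as an unfolding of it, so the proposition rests on \cite[Thm.~V.5]{cohen_evidenced_2021} and the tripos-to-topos theorem. That appeal is what actually does the work: the terminal object, products, equalizers and subobject classifier listed before the proposition do not by themselves give an elementary topos, since no power objects or exponentials are exhibited.

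The step that does not go through as written is Step~2's claim that morphisms coincide on the nose.
\begin{itemize}
\item The two notions of morphism differ. The functional predicates of Construction~\ref{const:EFT} carry no strictness clause, and $\peq$ compares representatives only on existents. Pitts' morphisms are classes of \emph{strict} relations ($F(x,y)\le_{X\times Y}\rsex(x)\land\rsex(y)$) under full interderivability in $\Phi^{X\times Y}$.
\item Strictness cannot be shown redundant at the level of relations. Take $(Y,\sim)=(\singleton,\htop)$ and the constant predicate $F(x,*)=\top$. It satisfies $\pext$, $\psv$ and $\ptot$ for every $(X,\sim)$, but is strict only if $\rsex(x)$ is evidenceable uniformly in $x$.
\item Consequently, when some $\rsex(x)$ is not evidenceable, $F$ is $\peq$-equivalent to its strictification without being interderivable with it. A morphism in the paper's sense can thus have representatives that are not Pitts morphisms at all, and the identity-on-data map is undefined.
\end{itemize}

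What you need instead is a strictification argument.
\begin{enumerate}
\item Every strict functional relation satisfies $\pext$, $\psv$ and $\ptot$.
\item For any functional predicate $F$, the relation $F^{s}(x,y):=\rsex(x)\land\rsex(y)\land F(x,y)$ is strict and functional, and $\peq(F,F^{s})$ is evidenceable. Use $\psym$ and $\ptrs$ to get $\rsex(x')$ and $\rsex(y')$ in the extensionality clause, and Frobenius for $\coprod$ in totality.
\item For strict $F,G$, $\peq(F,G)$ is evidenceable iff $F$ and $G$ are interderivable.
\end{enumerate}
Hence the inclusion of strict relations induces a bijection of hom-sets, with inverse $[F]\mapsto[F^{s}]$. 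It respects composition because, for strict $F$, the guard $\rsex(y)$ in $\coprod_y(\rsex(y)\land F(x,y)\land G(y,z))$ is redundant. The result is an isomorphism of categories that is the identity on objects but not on representatives, which suffices. Your lemma ($\htop\le_X\phi$ iff $\prod_x\phi(x)$ is evidenceable) handles only uniformity of evidence, so this part is a missing step rather than bookkeeping.

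Smaller points:
\begin{itemize}
\item Validity in $\tripP(\singleton)$ means $\htop\le\phi$, not $\phi\le\htop$.
\item Matching objects requires reading $\ptrs$ with conclusion $x\sim z$; as printed, it is trivially evidenceable.
\item No Beck--Chevalley argument is needed for iterated quantifiers. In $\UFam(\EF)$ the adjoints are fibrewise $\prod$ and $\coprod$ by definition, and $\prod_{(x,y)}$ and $\prod_x\prod_y$ are interderivable using the constructs of Table~\ref{tab:ef_definition}.
\end{itemize}
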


The author proposes the term \emph{effectful topos} for the realizability topos
$\eftopos(\EF)$ constructed from an evidenced frame $\EF$.

\begin{remark}
    \label{remark:subobject}
    Again, a monomorphism $m \colon (S, \sim) \hookrightarrow (X, \sim)$ one-to-one corresponds to the classifying morphism $\chi_m$. 
    Recall that a subobject is an isomorphism class of monomorphisms. The subobject $(S, \sim)$ has the canonical representation $i_m \colon (X, \sim_m) \hookrightarrow (X, \sim)$
    where $x \sim_m x' := x \sim x' \land \chi_m(x)$ and $i_m$ is the inclusion morphism represented by $\id_X$, as the diagram below:
    \begin{equation*}
        \begin{tikzcd}[row sep=large, column sep=large]
            (X, \sim_m) \arrow[rd, dashed, xshift=0.5ex, yshift=0.9ex, "h"] \arrow[rdd, bend right, hook, "i_m"] \arrow[rrd, bend left, "!"']\\
        & (S, \sim) \arrow[r, "!"] \arrow[d, hook, "m"'] \arrow[lu, xshift=-0.5ex, yshift=-0.9ex, "m'", "\cong" sloped] & (\singleton, \htop) \arrow[d, hook, "\true"] \\
        & (X, \sim) \arrow[r, dashed, "\chi_m"'] & (\Phi, \efequivop)\rlap{.}
        \end{tikzcd}
    \end{equation*}
\end{remark}

%% file: figures/evidenced_frame_definition.tex
    \begin{tabular}{l|l|l|l}
        \hline
        Axiom &
        Logical Const. &
        Program Const. &
        Evidence Relation \\ \hline \hline
        \emph{Reflexivity} &
                           &
        $\eid \in E$ &
        $\phi \evrel{\eid} \phi$ \\ \hline
        \emph{Transitivity} &
                            &
        $; \colon E^2 \to E$ &
        $\phi_1 \evrel{e_1} \phi_2$ and $\phi_2 \evrel{e_2} \phi_3 \Rightarrow \phi_1 \evrel{\ecompos{e_1}{e_2}} \phi_3$ \\ \hline
        \emph{Top} &
        $\top \in \Phi$ &
        $\etrue \in E$ &
        $\phi \evrel{\etrue} \top$ \\ \hline
        \emph{Conjunction} &
        $\land \colon \Phi^2 \to \Phi$ &
        \begin{tabular}[c]{@{}l@{}}$\epair{\cdot}{\cdot} \colon E^2 \to E$\\ $\efst, \esnd \in E$\end{tabular} &
        \begin{tabular}[c]{@{}l@{}}$\phi \evrel{e_1} \psi_1$ and $\phi \evrel{e_2} \psi_2 \Rightarrow \phi \evrel{\epair{e_1}{e_2}} \psi_1 \land \psi_2$\\ $\phi_1 \land \phi_2 \evrel{\efst} \phi_1$ and $\phi_1 \land \phi_2 \evrel{\esnd} \phi_2$\end{tabular} \\ \hline
        \emph{Univ. Implication} &
        $\imp \colon \Phi \times \power(\Phi) \to \Phi$ &
        \begin{tabular}[c]{@{}l@{}}$\lambda \colon E \to E$\\ $\eeval \in E$\end{tabular} &
        \begin{tabular}[c]{@{}l@{}}$(\forall \psi \in \Psi.\phi_1 \land \phi_2 \evrel{e} \psi) \Rightarrow \phi_1 \evrel{\elambda e} \phi_2 \imp \Psi$\\ $\phantom{(}\forall \psi \in \Psi.(\phi \imp \Psi) \land \phi \evrel{\eeval} \psi$\end{tabular} \\ \hline
    \end{tabular}

%% file: section/mca_k1.tex
\section{Effective Topos as Effectful Topos}
\label{sec:mca_k1}
\setcounter{general}{1}

Our objective in this section is to introduce monadic combinatory algebras (MCAs) and demonstrate that the effective topos can be viewed as a specific example of effectful toposes through this structure.
Partial combinatory algebras (PCAs) are structures that algebraically represent the concept of computability formulated by Turing machines and $\lambda$-calculi.
Kleene's model $\kleenek$ is a well-known instance of a PCA.
In the context of the theory of computation, the effective topos constructed from $\kleenek$ is a remarkably important example of a topos.
MCAs are extensions of PCAs that incorporate effectful computation, and $\kleenek$ can also be constructed as an MCA.

We briefly recall the construction of evidenced frames from MCAs, referring to Cohen et al. \cite{cohen_partial_2025} for full details.
Here, we consider only monads over $\CSet$. For a monad $M$, let $\eta \colon \id \Rightarrow M$ and $\mu \colon MM \Rightarrow M$ denote its return and bind, respectively.
An \emph{MCA over a monad $M$} is a pair $(\algebraA, \cdot)$ consisting of a set $\algebraA$ of \emph{codes}
and a \emph{Kleisli application} $\cdot \colon \algebraA \times \algebraA \to M(\algebraA)$ that is combinatory complete \cite[Def. 5]{cohen_partial_2025}.
Computation is represented by \emph{expressions} constructed via codes, variables and a formal binary operator $\bullet$.
We write $E(\algebraA)$ for the set of expressions of $(\algebraA, \cdot)$.
Expressions without variables are interpreted by the \emph{evaluation} $\nu$ given by
\begin{align*}
    \nu(c) := \eta_{\algebraA}(c), \quad
    \nu(e_1 \bullet e_2) := \mu_{\algebraA} \circ M(\lambda c_1. \mu_\algebraA \circ M(\lambda c_2. c_1 \cdot c_2)(\nu(e_2)))(\nu(e_1))
\end{align*}
for a code $c$ and expressions $e_1, e_2$.
\begin{example}
    \label{example:pca_as_mca}
    The following is one of the most fundamental and notable examples of an MCA.
    Let a monad $M_1$ be
    \begin{align*}
        M_1(A) := \Set{ S \subseteq A | \left|S\right| \le 1}, \quad
        \eta_A(x) := \qty{x}, \quad \mu_A(m) := \bigcup_{X \in m} X.
    \end{align*}
    Moreover, take the natural numbers $\Natural$ as the codes,
    and let the application be $a \cdot b := \qty{\varphi_a(b)}$ where $\varphi_i \colon \Natural \rightharpoonup \Natural$ denotes the $i$-th partial computable function.
    Note that $a \cdot b$ is a singleton set if $\varphi_a(b)$ is defined and empty otherwise.
    This $(\Natural, \cdot)$ is called \emph{Kleene's first model} $\kleenek$.
\end{example}

An \emph{$M$-modality} is a natural transformation $\modality \colon M \Rightarrow ((- \to \Omega) \to \Omega)$ with a complete Heyting prealgebra $\Omega$
satisfying the consistency axioms \cite[Def. 13]{cohen_partial_2025}.
We also denote $\modalangle{x}{m}{\phi(x)} := \modality_\algebraA(m)(\phi)$ for $m \in M(\algebraA)$ and $\phi \colon \algebraA \to \Omega$.
Furthermore, we fix a \emph{$\modality$-separator} $\separator \subseteq \algebraA$ that satisfies the progress property,
which intuitively guarantees that computations in $\separator$ yield values.
Finally, the tuple $(\algebraA, \Omega, \modality, \separator)$ constitutes a \emph{monadic core}.
Every monadic core induces an evidenced frame $(\Omega^\algebraA, \separator, \cdot \evrel{\cdot} \cdot)$, where the evidence relation is defined by
\begin{equation*}
    \phi \evrel{e} \psi := \forall c \in \algebraA.\, \phi(c) \le \modalangle{r}{e \cdot c}{\psi(r)}
\end{equation*}
for predicates $\phi, \psi \colon \algebraA \to \Omega$ and a code $e \in \separator$ \cite[Thm. 16]{cohen_partial_2025}.

Following Example \ref{example:pca_as_mca}, we see the two ways of creating toposes from Kleene's $\kleenek$, and their coincidence.
First, let us describe the traditional construction.
The set $\power(\kexps)^X$ forms a Heyting prealgebra by the order
\begin{equation*}
    \phi \le \psi := \exists a \in \kexps, \forall x \in X, \forall b \in \phi(x).\ \nu(a \bullet b) \neq \emptyset \land \nu(a \bullet b) \subseteq \psi(x)
\end{equation*}
for a set $X$ and predicates $\phi, \psi \colon X \to \power(\kexps)$.
This gives rise to a pseudofunctor $\tripET(-) := \power(\kexps)^{(-)} \colon \opcat{\CSet} \to \CatHp$ by letting the mapping on functions be
\begin{equation*}
    \begin{array}{cccc}
        \tripET(f) : & \power(\kexps)^{Y} & \to & \power(\kexps)^{X} \\
                     &  \phi & \mapsto & \phi \circ f
    \end{array}
\end{equation*}
for each $f \colon X \to Y$. Since $\tripET$ is indeed a $\CSet$-tripos, 
we can get the \emph{effective topos} $\CSet[\tripET]$ by Pitts' construction.

On the other hand, we have another method for constructing a topos from $\kleenek$.
Let $\boolhyt := \qty{\hbot, \htop}$ be the complete Heyting algebra of Booleans; that is, there is a unique non-trivial relation $\hbot \le \htop$ on $\boolhyt$.
Note that 
any $m \in M_1(A)$ is a singleton subset of $A$ or the emptyset for each set $A$ by the definition of $M_1$.
Therefore, the $M_1$-modality $\modality_\kleenek \colon M_1 \Rightarrow ((-) \to \boolhyt) \to \boolhyt$ can be defined by
\begin{equation*}
    \makleene{x}{m}{\phi(x)} := \left\{
        \begin{array}{cl}
            \phi(a) & (m = \qty{a}) \\
            \hbot & (m = \emptyset)
        \end{array}
        \right.
\end{equation*}
for every $\phi \colon A \to \boolhyt$ and $m \in M_1(A)$.
Finally, the whole code $\Natural$ is a separator.
The evidenced frame $\EFkleene := (\effomega, \Natural, \cdot \evrel{\cdot} \cdot)$ is constructed from the monadic core $(\kleenek, \boolhyt, \modality_\kleenek, \Natural)$ \cite[Ex. 19]{cohen_partial_2025}.
Then, we have the tripos $\UFam(\EFkleene)$ and the topos $\eftopos(\EFkleene)$.

We observe a clear structural correspondence between $\tripET$ and $\UFam(\EFkleene)$.
Fundamentally, this arises from the canonical isomorphism between the powerset $\power(E(\Natural))$ and the exponential $\effomega$ via the characteristic function.
Specifically, for each set $X$, we define the function $\Char_X \colon \power(E(\Natural))^X \to (\effomega)^X$ by
\begin{equation*}
    \Char_X(\phi) := \lambda x \in X.\lambda a \in \Natural.\ \left\{
        \begin{array}{cl}
            \htop & (a \in    \nu(\phi(x))) \\
            \hbot & (a \notin \nu(\phi(x))),
        \end{array}
        \right.
\end{equation*}
where $\nu(K) := \bigcup_{e \in K}\nu(e)$ for $K \in \power(E(\Natural))$.
It is straightforward to verify that the collection $\Char_{(-)}$ constitutes a natural isomorphism $\tripET \Rightarrow \UFam(\EFkleene)$.
This establishes a one-to-one functor mapping each object $(X, \sim)$ of $\CSet[\tripET]$ to the object $(X, \Char_{X \times X}(\sim))$ of $\eftopos(\EFkleene)$,
and each morphism $\eqclass{F} \colon (X, \sim) \to (Y, \sim)$ to the morphism $\eqclass{\Char_{X \times Y}(F)} \colon (X, \Char_{X \times X}(\sim)) \to (Y, \Char_{Y \times Y}(\sim))$, respectively.

\begin{proposition}
    \label{prop:eff_efk_isomorphism}
    The characteristic functor $\Char \colon \CSet[\tripET] \to \eftopos(\EFkleene)$ gives an isomorphism.
\end{proposition}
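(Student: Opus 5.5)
The plan is to show that $\Char_{(-)}$ is an equivalence of $\CSet$-triposes, and then transport this through Pitts' construction. The definitions of $\CSet[\tripET]$ and $\eftopos(\EFkleene)$ (Construction~\ref{const:EFT}) only use the fibrewise preorders, the Heyting operations and the quantifiers $\prod,\coprod$, so a fibrewise order-isomorphism that commutes with reindexing carries each notion to the corresponding one.

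\emph{Step 1: $\Char_X$ preserves and reflects the order.} I will show, for $\phi,\psi\colon X\to\power(\kexps)$,
\begin{equation*}
\phi\le\psi \iff \Char_X(\phi)\le_X\Char_X(\psi).
\end{equation*}
For ($\Rightarrow$), take a realizing expression $a$. By combinatory completeness of $\kleenek$, I replace $a$ by a code $e$ whose application $e\cdot c$ computes $\nu(a\bullet c)$. This $e$ is then an evidence in $\Natural$ for $\Char_X(\phi)(x)\evrel{e}\Char_X(\psi)(x)$ under $\modality_\kleenek$. For ($\Leftarrow$), a code $e$ is itself an expression, and $\nu(e\bullet b)=e\cdot\nu(b)$.

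I expect this step to be the main obstacle. The $\tripET$ order quantifies over \emph{expressions} $b\in\phi(x)$, whereas $\Char$ only sees the values $\nu(\phi(x))$. If $\phi(x)$ contains a divergent expression $b$, then $\nu(a\bullet b)=\emptyset$, and the two orders may genuinely disagree. My first attempt will be to prove that every $\phi$ is $\tripET$-equivalent to its normalisation $x\mapsto\nu(\phi(x))$, viewed as a set of codes. If that fails, I will read $\tripET$'s order as intended modulo $\nu$, i.e.\ quantifying over $b\in\nu(\phi(x))$.

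\emph{Step 2: naturality and surjectivity.} Compatibility with reindexing is immediate, since $\Char_X(\phi\circ f)=\Char_Y(\phi)\circ f$. Every $p\in(\effomega)^X$ equals $\Char_X$ of $x\mapsto\{a\in\Natural : p(x)(a)=\htop\}$, with codes regarded as expressions. By Step~1, each $\Char_X$ is therefore surjective, preserves and reflects the order, and so induces an isomorphism of posetal reflections. Because the Heyting operations and $\exists_f,\forall_f$ are determined up to equivalence by the order, $\Char$ preserves them up to $\simeq$.

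\emph{Step 3: the functor is an isomorphism.} By Steps~1 and~2, the requirements that $\sim$ be an equality predicate and that $F$ be a functional predicate (evidenceability of $\psym$, $\ptrs$, $\pext$, $\psv$, $\ptot$) hold in $\tripET$ iff they hold after applying $\Char$. The same is true for the equivalence $\peq(F,G)$. Hence $\Char$ is well defined on objects and morphisms, and functoriality follows because composition in both constructions is given by the same relational formula $\coprod_y(F\land G)$. It is faithful, because distinct classes are sent to distinct classes, and full, because every functional predicate on the image is $\Char$ of the preimage predicate. Surjectivity on objects follows from the surjectivity in Step~2. If $\Char$ is only surjective up to the choice of expression representatives, the conclusion I will aim for is an isomorphism of categories up to this identification, i.e.\ a bijection on classes and in particular an equivalence of toposes.
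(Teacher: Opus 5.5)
Your overall route is the one the paper takes. The paper simply asserts that $\Char_{(-)}$ is a natural isomorphism $\tripET\Rightarrow\UFam(\EFkleene)$ and lets Pitts' construction produce the functor, and your Steps~2--3 are a faithful elaboration of that. The ($\Rightarrow$) half of Step~1 is also fine, since it only uses those $b\in\phi(x)$ with $\nu(b)=\{c\}$.

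The gap is the ($\Leftarrow$) half, and it is worse than you anticipate. Read $\tripET(X)=\power(\kexps)^X$ literally. If some $\phi(x)$ contains a closed expression $b$ with $\nu(b)=\emptyset$, then $\nu(a\bullet b)=\emptyset$ for \emph{every} $a$, so already $\phi\le\phi$ fails. The relation is therefore not a preorder, and your plan of proving $\phi$ equivalent to its normalisation $x\mapsto\nu(\phi(x))$ cannot succeed.

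There is also a second, independent obstruction on the target side, which your fallback does not touch. The order requires $\nu(a\bullet b)\subseteq\psi(x)$, i.e.\ the output \emph{code} must itself be an element of $\psi(x)$. By contrast, $\Char$ only records membership in $\nu(\psi(x))$. Take $\psi(x)=\{t\}$ for a compound closed expression $t$ with $\nu(t)=\{5\}$, and $\phi(x)=\{5\}$. Then $\Char_X(\phi)=\Char_X(\psi)$, but $\phi\not\le\psi$: since $\nu(a\bullet 5)\subseteq\Natural$ and $t\notin\Natural$, the value of $a\bullet 5$ would have to be empty. So your observation that a code $e$ is an expression with $\nu(e\bullet b)=e\cdot\nu(b)$ does not yield ($\Leftarrow$). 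Restricting the quantifier to $b\in\nu(\phi(x))$ is not enough; you must also replace $\psi(x)$ by $\nu(\psi(x))$.

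Even with both sides read modulo $\nu$, the map $K\mapsto\nu(K)$ is far from injective. The induced functor is then not injective on objects, so you only get the equivalence you mention, not the isomorphism claimed.

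The repair that yields the statement as written is to take $\tripET(X)=\power(\Natural)^X$, the usual effective tripos. The paper's one-line \emph{straightforward} claim tacitly presupposes this. Then:
\begin{itemize}
\item $\nu(K)=K$, and each $\Char_X$ is a literal bijection given by characteristic functions;
\item your Step~1 argument goes through in both directions, with $b$ ranging over codes;
\item Steps~2--3 give bijections on objects and on classes of functional predicates, i.e.\ an isomorphism of categories.
\end{itemize}
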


We can identify $\CSet[\tripET]$ with $\eftopos(\EFkleene)$ by Proposition \ref{prop:eff_efk_isomorphism}.
Then, we also refer to the latter as the effective topos in this paper.

%% file: section/topology.tex
\section{Lawvere-Tierney Topologies and Sheaves}
\setcounter{general}{1}
\label{sec:lt_topology}

\begin{sloppypar}
As the final part of our study, we examine {Lawvere-Tierney topologies} on $\toposEff$ and their {sheaves}.
These topologies, also called {local operators} or {nuclei} \cite{van_den_berg_kuroda-style_2019}, are morphisms representing properties on a topos;
given a topos $\toposE$ and a topology $j$, one obtains a new topos $\Sheaf_j(\toposE)$ referred to as a $j$-sheaf topos.
\end{sloppypar}

Of particular importance is the case of the effective topos $\CSet[\tripET]$.
As noted by Kihara \cite{kihara_lawvere-tierney_2023}, for each topology, we can consider an oracle computation and its corresponding intermediate logic.
Then, we can directly utilize these results
by leveraging the isomorphism in the previous section.
That is, it suffices to transfer concepts from the effective topos via the characteristic functor to define their counterparts in the effectful topos $\toposEff$.

Let us define the Lawvere-Tierney topology.
Although this concept was originally introduced by Lawvere \cite{lawvere_quantifiers_1970} and Tierney \cite{tierney_sheaf_1972},
here we adopt an approach that is more suited for analyzing the effective topos \cite{kihara_lawvere-tierney_2023,hyland_effective_1982,lee_subtoposes_2011,lee_basic_2013}.

\begin{definition}[Lawvere-Tierney Topology]
    \label{dfn:LT_topology}
    A \emph{Lawvere-Tierney topology} on $\toposEff$ is a function $j \colon \effomega \to \effomega$ such that the following are evidenceable:
    \begin{align*}
        \pinc &:= \prod_{\phi \in \Phi}(\phi \imp j(\phi)), \quad \pidm := \prod_{\phi \in \Phi}(j(j(\phi)) \imp j(\phi)), \\
        \pprs &:= \prod_{\phi \in \Phi} \prod_{\psi \in \Phi}(j(\phi \land \psi) \efequiv j(\phi) \land j(\psi)).
    \end{align*}
\end{definition}

Roughly speaking, a Lawvere-Tierney topology $j$ introduces a ``world'' where truth conditions are weakened.
The condition $\pinc$ implies that even if a proposition $\phi$ is not evidenceable, $j(\phi)$ might be. Let us informally refer to this state as ``holding in the $j$-world.''
If $S$ is a subobject of $X$, the existents of $S$ may be strictly smaller than those of $X$.
However, there are cases where they coincide within the $j$-world; such subobjects are called dense as defined below.

In what follows, we simply write $X$ for an object $(X, \sim)$ with an arbitrary equality $\sim$, provided no confusion arises.
Specifically, $\effomega$ and $\singleton$ denote $(\effomega, \efequiv)$ and $(\singleton, \htop)$, respectively.

\begin{definition}[Density]
    \label{dfn:topos_closure}
    Let $j \colon \effomega \to \effomega$ be an extensional function.
    Given a subobject $m \colon S \hookrightarrow X$,
    its \emph{$j$-closure} $\overline{m} \colon \overline{S} \hookrightarrow X$ is the subobject given by $\chi_{\overline m} := j \circ \chi_m$.%
    \footnote{Note that $\overline S$ is uniquely determined as it is the pullback of $\chi_{\overline m}$ and $\true$.}
    In particular, $m$ is said to be \emph{$j$-dense} (or simply \emph{dense}) if $\overline S \cong X$ holds.
\end{definition}

Since all Lawvere-Tierney topologies are extensional, one can consider the density for them.

\begin{definition}[Sheaf]
    \label{dfn:sheaves}
    Let $j$ be a Lawvere-Tierney topology on $\toposEff$.
    An object $A$ is said to be \emph{$j$-separated} if, for every $j$-dense subobject $m \colon S \hookrightarrow X$, the following mapping $m_A^*$ is injective:
    \begin{equation*}
        \begin{array}{rccc}
            m_A^* \colon & \toposEff(X, A) & \to & \toposEff(S, A) \\
                         &  f & \mapsto & f \circ m.
        \end{array}
    \end{equation*}
    Furthermore, $A$ is a \emph{$j$-sheaf} if $m_A^*$ is bijective.
\end{definition}

We write $\Sheaf_j(\toposEff)$ for the full subcategory consisting of all the $j$-sheaves on $\toposEff$,
which forms a topos \cite{mac_lane_topoi_1994}. Thus, we call $\Sheaf_j(\toposEff)$ the \emph{$j$-sheaf topos} of $\toposEff$.

Our main result characterizes the sheaves within an effectful topos.
First, we introduce another term. Let $(A, \sim_A)$ be an object.
For a given $j$-dense subobject $m \colon S \hookrightarrow X$ and a given morphism $g \colon S \to A$,
the \emph{$j$-singleton for $m$ and $g$} is the extensional function $\psi_{-} \colon X \to A \to \effomega$ defined as
\begin{equation*}
    \psi_x(a) := j\qty(\coprod_{s \in S} (\rsex(s) \land m(s) \sim x \land g(s) \sim_A a)).
\end{equation*}

\begin{theorem}
    \label{thm:main_result}
    Let $j$ be a Lawvere-Tierney topology on $\toposEff$ and $(A, \sim_A)$ be an object.
    \begin{enumerate}
        \item $A$ is $j$-separated if and only if the proposition
            \begin{equation*}
                \psep(A) := \prod_{a \in A} \prod_{b \in A} (\rsex(a, b) \land j(a \sim_A b) \imp a \sim_A b)
            \end{equation*}
            is evidenceable.
            \label{enumerate:main_result_1}
        \item $A$ is a $j$-sheaf if and only if it is $j$-separated and
            \begin{equation*}
                \pdsc(A) := \prod_{x \in X}\coprod_{a \in A} \qty(\rsex(a) \land \prod_{b \in A}(\rsex(b) \imp (\psi_x(b) \efequiv j(a \sim_A b)) ))
            \end{equation*}
            is evidenceable for any $g$ and $m$.
            \label{enumerate:main_result_2}
    \end{enumerate}
\end{theorem}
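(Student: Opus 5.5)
We reduce to the standard descriptions of $j$-separated objects and $j$-sheaves in a tripos-induced topos (Hyland, van Oosten; Kihara for the effective topos) and read them through Construction~\ref{const:EFT}, working with evidences instead of realizers. Two preliminary facts come first. First, by Remark~\ref{remark:subobject}, every subobject $m\colon S\hookrightarrow X$ may be replaced by the canonical inclusion $i_m\colon (X,\sim_m)\hookrightarrow(X,\sim)$ with $\sim_m = \sim\land\chi_m(-)$. Second, $m$ is $j$-dense iff $\prod_{x\in X}(\rsex(x)\imp j(\chi_m(x)))$ is evidenceable. This follows from Definition~\ref{dfn:topos_closure}, since $\overline S\cong X$ means $j\circ\chi_m$ is equivalent to the map $X\to\singleton\to\effomega$ through $\true$. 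Throughout we use that $j$ is extensional, and that monotonicity $(\phi\imp\psi)\imp(j\phi\imp j\psi)$ is evidenceable from $\pprs$. We take these as routine.

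\textbf{Item 1.} For ($\Leftarrow$), let $m$ be dense and let $f,f'\colon X\to A$ satisfy $fm=f'm$; by choice, represent both by functions. Equality after restriction gives evidence of $\rsex(x)\land\chi_m(x)\imp f(x)\sim_A f'(x)$. Applying monotonicity of $j$ together with density yields $\rsex(x)\imp j(f(x)\sim_A f'(x))$. Then $\psep(A)$ gives $f(x)\sim_A f'(x)$, which is exactly $\peq(\lift f,\lift{f'})$.

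For ($\Rightarrow$), use the test object $X=A\times A$ with equality $(a,b)\sim(a',b'):=a\sim_A a'\land b\sim_A b'$. Take the subobject $S$ given by $\chi(a,b)=a\sim_A b$, closed under $j$-closure; that is, take $S$ to be the diagonal as a subobject of its own closure $\overline S$, whose predicate is $\rsex(a,b)\land j(a\sim_A b)$. This inclusion is $j$-dense, because $j$ applied to the restricted characteristic map is evidenced by $\pidm$ and $\pprs$. The two projections $\overline S\to A$ agree on $S$, so by injectivity they are equal. Unfolding $\peq$ then produces an evidence of $\psep(A)$. The key point is checking density here, which needs only $\pinc$ and $\pidm$.

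\textbf{Item 2.} Given separatedness, $m_A^*$ is injective, so the sheaf condition reduces to surjectivity: every $g\colon S\to A$ extends along $m$. The $j$-singleton $\psi_x$ is the canonical candidate for ``the value of the extension at $x$'', in the style of the sheaf-of-$j$-closed-singletons construction. For ($\Leftarrow$), define $F(x,a):=\rsex(x,a)\land\psi_x(a)$ and verify $\pext$, $\psv$ and $\ptot$. For $\pext$ we use extensionality of $\psi$. For $\psv$, from $\psi_x(a)$ and $\psi_x(a')$ we derive $j(a\sim_A a')$ by collapsing the two $\coprod_s$ witnesses via $\psv(g)$, transitivity and $\pprs$; then $\psep(A)$ finishes. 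For $\ptot$, $\pdsc(A)$ supplies an $a$ with $\psi_x(a)\efequiv j(a\sim_A a)$, and the right-hand side follows from $\pinc$. Finally, $F\circ m\equiv g$ because $\psi_{m(s)}(g(s))$ holds by $\pinc$, and $\psv$ gives the converse. For ($\Rightarrow$), if $f$ extends $g$, we take $a:=f(x)$ as the $\coprod$-witness and show $\psi_x(b)\efequiv j(f(x)\sim_A b)$. The forward direction pushes $f$ inside $j$ using $fm=g$. The reverse direction uses density to obtain $j(\chi_m(x))$, hence $j(\coprod_s\ldots)$ with $s$ a preimage, and combines it with $j(f(x)\sim_A b)$ via $\pprs$.

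The main obstacle is the bookkeeping of $\coprod$ inside $j$. Eliminating an existential under $j$ requires the derivable rule $j(\coprod_s\phi_s)\land(\prod_s(\phi_s\imp j\psi))\imp j\psi$, which we obtain from monotonicity, $\pidm$ and the $\coprod$-elimination built into its definition as a $\prod$. We will state this as a lemma first and reuse it in both $\psv$ and the ($\Rightarrow$) direction of item 2.
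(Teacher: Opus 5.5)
Your item~1 and the $(\Rightarrow)$ half of item~2 follow the paper. You use the same density criterion $\pden(m)$, the same test configuration (your $\Delta\hookrightarrow\overline{\Delta}$ is the paper's $S\hookrightarrow(A^2,\sim_{jA})$, compared via the two projections), and the same witness $a:=f(x)$. Density of $\Delta$ in $\overline{\Delta}$ needs no topology axiom at all, because $(a,b)\sim_{jA}(a,b)$ already has $j(a\sim_A b)$ as a conjunct. In the second half of $(\Rightarrow)$, your appeal to $j(\chi_m(x))$ is the correct form; the paper's text instead claims that density makes $\coprod_s(\rsex(s)\land m(s)\sim x)$ itself evidenceable.

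The genuine difference is the $(\Leftarrow)$ half of item~2. The paper picks, for each $x$, a meta-level witness $a_x$ out of the evidenceable $\coprod_a$ in $\pdsc(A)$ and sets $f(x):=a_x$. It then gets $f\circ m=g$ in two lines from $\psi_{m(s)}(g(s))$ and $\psep(A)$. That is short and gives a function representative directly, but it has two costs:
\begin{itemize}
\item It relies on a witness property of $\EFkleene$: there an evidenceable $\coprod$ is uniformly equivalent to a union of realizer sets, so one realizer fixes a witness for every $x$. This does not transfer to a general evidenced frame.
\item It never checks that $x\mapsto a_x$ is extensional. That needs its own argument: $x\sim x'$ gives $\psi_x\efequiv\psi_{x'}$, then evaluate at $b=a_x$ and apply $\psep(A)$.
\end{itemize}

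Your closed-singleton relation $F(x,a)=\rsex(x,a)\land\psi_x(a)$ is the standard tripos-theoretic construction. It uses no choice and stays internal, so it would go through in any $\eftopos(\EF)$. It also gives each hypothesis a clean role:
\begin{itemize}
\item $\psep(A)$ yields $\psv(F)$;
\item $\pdsc(A)$ yields $\ptot(F)$, and only the half $j(a\sim_A b)\imp\psi_x(b)$ at $b=a$ is needed;
\item extensionality of $\psi$ yields $\pext(F)$, which is exactly the check the paper skips.
\end{itemize}
The price is the $\coprod$-under-$j$ bookkeeping. When you collapse the two witnesses $s,t$ in $\psv(F)$, what you actually need is that $m$ is monic plus extensionality of $g$, not $\psv(g)$. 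With the canonical representative $i_m$, monicity reduces to $s\sim t$ together with $\chi_m(s)$.

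Finally, both your argument and the paper's only establish, and only use, $\pdsc(A)$ guarded by $\rsex(x)$. The reason is that $\rsex(f(x))$ and density are available only for existent $x$.
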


We provide a proof of this theorem in \ref{sec:appendix_proof_theorem}.

The concept of $j$-singletons serves as a tool to show the surjectivity of $m_A^*$, that is, to extend any morphism $g \colon S \to A$ to a new morphism $f \colon X \to A$.
Intuitively, this can be explained as follows: since $S$ is a subobject, there may be existents $x \in X$ that do not exist in $S$.
However, due to the density of $S$, an existent $s$ satisfying $x \sim m(s)$ always exists in the $j$-world.
The singleton $\psi_x(a)$ describes the condition ``for an existent $s$ equal to $x$, $a$ is equal to $g(s)$'' within the $j$-world.
The condition $\pdsc(A)$ guarantees that a unique $a_x$ satisfying this description exists within $A$.
The morphism $f$ mapping $x$ to this $a_x$ is precisely the desired morphism satisfying $f \circ m = g$.

\subsection{Effectful Topos via CPSCA vs. $\lnot\lnot$-Sheaf Topos}
\label{ssec:cpsca_vs_dnt_topos}

To demonstrate the applicability of Theorem \ref{thm:main_result}, we show that two specific toposes coincide:
one is a sheaf topos induced by a topology, and the other is an effectful topos constructed from an MCA. Both serve as models of classical logic.

We begin with the former. The topos $\Sheaf_{\lnot\lnot}(\toposEff)$ yielded by the topology $\lnot\lnot$ is a model of classical realizability via the double negation translation \cite{godel_intuitionistic_2001,gentzen_uber_1974,van_den_berg_kuroda-style_2019}.
\begin{definition}
    \label{dfn:dne-topology}
    We define the function $\lnot \colon \effomega \to \effomega$ by $\lnot\phi := \phi \imp \bot$.
    The \emph{double negation topology} $\lnot\lnot$ is given by $\lnot\lnot \phi := \lnot(\lnot \phi)$.
\end{definition}
One can check that $\lnot\lnot$ is indeed a Lawvere-Tierney topology on $\toposEff$.
It is well-known that $\Sheaf_\dnt(\toposEff)$ is equivalent to $\CSet$ \cite{hyland_effective_1982}; this topos represents the world of set-theoretic mathematics.

In parallel, there is an MCA with capabilities similar to the above.
Krivine \cite{krivine_typed_2001,krivine_realizability_2009} pioneered a computational approach where double negation elimination is realized by capturing and restoring continuations on a stack.
The {continuation-passing-style combinatory algebra (CPSCA)} \cite{cohen_partial_2025} formalizes this mechanism within the MCA framework,
and induces an evidenced frame analogous to type-theoretic constructions \cite{gardelle_cps_2023}.

\begin{construction}
    \label{example:topos_cpsca}
    A \emph{CPSCA} is an instance of MCA whose monad is given by
    \begin{align*}
        M_\cps(A) := (A \to R) \to R, \quad
        \eta_A(a) := \lambda k.k(a), \quad \mu_A(m) := \lambda k. m(\lambda g.g(k)),
    \end{align*}
    where $R$ is an arbitrary set. Roughly, $R$ represents the ultimate results of the whole computation \cite[\S 3.2.3]{cohen_partial_2025}.
    This algebra possesses codes $\emem$ and $\ethrow{k}$ that save and replace continuations (the so-called \callcc\, and \throw{}{}),
    making the following hold for all $c, c' \in \Natural$ and $k, k' \colon \Natural \to \Natural$:
    \begin{equation*}
        (\emem \cdot c)(k) = (c \cdot \ethrow{k})(k), \quad (\ethrow{k} \cdot c')(k') = k(c').
    \end{equation*}
    Furthermore, the pole $\pole$ is a subset of $R$, and we write $f \orthog a$ for a function $f \colon A \to R$ and an element $a \in A$ when $f(a) \in \pole$.
    Intuitively, $\pole$ is the set of valid computation results, and $f \orthog a$ signifies that the computation $f$ succeeds on the input $a$.
    Again, $\meet$ and $\himp$ denote the meet and implication of the Heyting prealgebra $\boolhyt$.
    We define the $M_\cps$-modality $\modality_\cps$ by
    \begin{align*}
        \macps{x}{m}{\phi(x)} &:= \bigmeet_{k \in R^A} \qty(\qty(\bigmeet_{a \in A}(\phi(a) \himp k \orthog a)) \himp m \orthog k).
    \end{align*}
    Aligning with our setting for $\toposEff$, we take $\algebraA := \Natural, R := \Natural, a \cdot b := \lambda k.k(\varphi_a(b))$, and $\pole := \qty{0}$. Let us denote this algebra by $\algebracps$. 
    Finally, for the separator, we take the set $\setpl \subseteq \Natural$ of ``proof-like'' codes. This can be explicitly defined as the combinatory complete closure generated by $\codeS, \codeK$, and $\emem$,
    where $\codeS$ and $\codeK$ are the codes corresponding to the S and K combinators, respectively.
    The tuple $(\algebracps, \boolhyt, \modality_\cps, \setpl)$ forms a monadic core and induces the evidenced frame $\EF_\cps := (\effomega, \setpl, \cdot \evrel{\cdot} \cdot)$ \cite[Ex. 22]{cohen_partial_2025}.
    Thus, we obtain the effectful topos $\toposE_\cps := \eftopos(\EF_\cps)$.
\end{construction}

We now turn to our central question: are $\Sheaf_{\dnt}(\toposEff)$ and $\toposE_\cps$ equivalent? In fact, these toposes are moreover isomorphic.
To establish this result, we require the following key lemmata whose proofs are provided in \ref{sec:appendix_proof_cpsca_dnt}.

\begin{lemma}
    \label{lem:k1_cps_extension}
    If $\phi \in \effomega$ is evidenceable in $\EF_\kleenek$, then it is also evidenceable in $\EF_\cps$.
\end{lemma}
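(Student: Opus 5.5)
Evidenceability in $\EF_\kleenek$ of $\phi \in \effomega$ means $\top \evrel{e} \phi$ for some $e \in \Natural$. Since $\top$ is the constantly-$\htop$ predicate, unfolding the modality $\modality_\kleenek$ shows this says: for every $c \in \Natural$, $\varphi_e(c)$ is defined and $\phi(\varphi_e(c)) = \htop$. In particular, fixing any input (say $c=0$), there is a code $a := \varphi_e(0)$ with $\phi(a) = \htop$. So evidenceability in $\EF_\kleenek$ is equivalent to $\phi$ being nonempty, i.e.\ $\phi(a)=\htop$ for some $a\in\Natural$. (If $\top$ in $\EF_\kleenek$ is only the top up to equivalence rather than literally $\lambda a.\htop$, I would use $\eid$-style reasoning to reduce to this case.)

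The plan is to transfer such a witness $a$ to $\EF_\cps$. I must exhibit a proof-like code $e' \in \setpl$ with $\top \evrel{e'} \phi$ in $\EF_\cps$. Unfolding $\modality_\cps$, this requires, for every $c$ with $\top(c)=\htop$ and every continuation $k \colon \Natural \to \Natural$: if $k \orthog b$ holds for all $b$ with $\phi(b)=\htop$, then $(e' \cdot c) \orthog k$, i.e.\ $(e'\cdot c)(k) = 0$. The naive choice is a constant code $e' := \codeK\, a$, so that $e' \cdot c = \eta(a)$ and $(e'\cdot c)(k) = k(a) \in \pole$ because $\phi(a)=\htop$. The catch is that $\codeK\,a$ need not lie in $\setpl$, since $a$ is an arbitrary numeral and $\setpl$ is the closure of $\codeS,\codeK,\emem$ only.

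The main obstacle is therefore the separator: only proof-like codes count as evidence. My first approach is to argue that $\setpl$, being combinatory complete and closed under application, contains codes for all numerals, e.g.\ Curry numerals together with a proof-like decoding. Alternatively, in the concrete $\algebracps$ with application via $\varphi$, I would check that the closure contains a constant function for each natural number, or at least one realizing each partial recursive function. Then I can choose $e'$ proof-like with $e'\cdot c = \eta(a)$, and the verification above finishes the argument.

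If numerals are not directly available, I would fall back on a \callcc\ trick. Take $e' := \lambda c.\,\emem \cdot (\lambda\kappa.\,\kappa\cdot a)$, or better, use a proof-like code that simulates $e$ itself: compose a proof-like interpreter (a universal code, proof-like because it is built from $\codeS,\codeK$) with the original realizer $e$. Either way I must argue that $\setpl$ contains enough of the pure partial recursive codes, and I expect this to be where the real work lies.
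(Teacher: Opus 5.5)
Your strategy is the paper's: from a $\kleenek$-realizer extract a code $a$ with $\phi(a)=\htop$, note that the pure computation $\eta(a)$ satisfies $\modality_\cps$, and then trade the realizer for a proof-like one. Your check $(e'\cdot c)(k)=k(a)\in\pole$ is exactly the After-Return step the paper uses; the paper just keeps $e$, with $e\cdot c=\eta(\varphi_e(c))$, instead of a constant code. Your reduction of evidenceability in $\EFkleene$ to non-emptiness of $\phi$ is correct. However, the last step carries the entire content of the lemma, and your proposal leaves it open. The paper disposes of it in one sentence: combinatory completeness of $\setpl$ lets $e$ be replaced by some $e'\in\setpl$.

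That sentence does not supply what you are missing, and none of your routes can. Combinatory completeness gives proof-like codes only for expressions whose parameters are already proof-like. Your candidates $\codeK\,a$, the wrapper $\lambda c.\,\emem\cdot(\lambda\kappa.\,\kappa\cdot a)$, and an interpreter applied to $e$ all carry $a$ or $e$ as a parameter. Your other options (a proof-like constant code for every number, a proof-like numeral decoder, proof-like codes for all partial recursive functions) each already imply $\Natural\subseteq\setpl$, by closure under application.

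This obstruction is unavoidable with the application $a\cdot b=\lambda k.k(\varphi_a(b))$ fixed for $\algebracps$. Let $\phi(b)=\htop$ iff $b=a$, and suppose $e'\in\setpl$ evidences $\phi$ in $\EF_\cps$. Testing with the continuation that is $0$ only at $a$ gives $\varphi_{e'}(\codeK)=a$. Since $\setpl$ is closed under application, testing the modal closure condition with the continuation that is $0$ exactly on $\setpl$ then gives $a\in\setpl$. So the lemma as stated requires $\Natural\subseteq\setpl$, which does not follow from $\setpl$ being generated by $\codeS,\codeK,\emem$.

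What your computation does prove is the restricted statement: if $\phi(a)=\htop$ for some $a\in\setpl$, then the proof-like code returned by $\codeK\cdot a$ evidences $\phi$ in $\EF_\cps$. For the full lemma you need an extra hypothesis such as $\Natural\subseteq\setpl$. The paper's appeal to combinatory completeness does not provide one either.
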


\begin{lemma}
    \label{lem:dne}
    For any $\phi \in \effomega$, $\lnot\lnot \phi \imp \phi$ is evidenceable in $\EF_\cps$.
\end{lemma}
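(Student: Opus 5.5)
We need to prove Lemma \ref{lem:dne}: for every $\phi\in\effomega$, the proposition $\lnot\lnot\phi \imp \phi$ is evidenced by some proof-like code in $\EF_\cps$. The plan is to exhibit one uniform code built from $\emem$ and the S, K combinators, which realizes the classical \callcc\ typing $((\phi\imp\bot)\imp\bot)\imp\phi$. Then we unfold the evidence relation of the monadic core $(\algebracps,\boolhyt,\modality_\cps,\setpl)$ and check the defining inequality pointwise.

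\textbf{Step 1: unfold the predicates.} First we make the relevant predicates explicit in $\EF_\cps$. The predicate $\phi\imp\psi$ holds at a code $c$ when, for every $a$ with $\phi(a)=\htop$, we have $\macps{r}{c\cdot a}{\psi(r)}=\htop$. Equivalently, for every continuation $k$ that succeeds ($k\orthog r$) on all $r$ with $\psi(r)=\htop$, the computation $(c\cdot a)(k)$ lies in $\pole$. The predicate $\bot=\prod\Phi$ is essentially empty in this sense. So $\macps{r}{m}{\bot(r)}$ only requires $m\orthog k$ for every $k$ whatsoever; this is the crucial ``escape'' clause that thrown continuations exploit. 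In the same way we unfold $\prod$ (as $\top\imp\Psi$) and the $\lambda$, $\eeval$ constructs. This lets us reduce ``$\top\evrel{e}(\lnot\lnot\phi\imp\phi)$'' to a concrete statement about $(e\cdot c)(k)\in\pole$ for all suitable $c,k$.

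\textbf{Step 2: the candidate code and the key check.} We take $e:=\emem$, possibly wrapped in the $\lambda$/currying conventions via $\codeS,\codeK$ so that it fits $\top\evrel{e}\prod\{\lnot\lnot\phi\imp\phi\}$. Fix $c$ with $(\lnot\lnot\phi)(c)=\htop$ and a continuation $k$ that succeeds on every $a$ with $\phi(a)=\htop$. By the defining equation,
\[
(\emem\cdot c)(k)=(c\cdot\ethrow{k})(k).
\]
It therefore suffices to show that $\ethrow{k}$ realizes $\lnot\phi=\phi\imp\bot$, since then the hypothesis on $c$ applied to $\ethrow{k}$ gives success with any continuation, in particular $k$. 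To see that $\ethrow{k}$ realizes $\lnot\phi$, take $a$ with $\phi(a)=\htop$ and an arbitrary $k'$. Then $(\ethrow{k}\cdot a)(k')=k(a)\in\pole$ because $k\orthog a$. This is exactly the $\bot$ clause from Step 1.

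\textbf{Step 3: uniformity and the main obstacle.} Finally, $\emem\in\setpl$ and the wrapping uses only $\codeS,\codeK$, so the code is proof-like and independent of $\phi$. The main obstacle is Step 1's bookkeeping. The modality quantifies over $k\in R^{\Natural}$, while the \callcc\ equation is stated for $k\colon\Natural\to\Natural$; here $R=\Natural$, so these match. The nontrivial part is matching the evidence relation's $\top\wedge(-)$ pairing, which comes from the $\lambda$ rule, with the raw application, so that the extra currying layer does not disturb the continuation being captured. To handle it, I would take the evidence of the implication to be $\lambda(\ecompos{\esnd}{\emem})$ at the level of the evidenced frame. I would then rely on \cite[Ex.~22]{cohen_partial_2025}, which gives the composition and projection evidences as proof-like codes whose sequencing respects $\mu$ of $M_\cps$. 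This reduces the verification to the core computation in Step 2.
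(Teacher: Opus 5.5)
Your core computation is the paper's: capture $k$ with $\emem$, check that $\ethrow{k}$ realizes $\lnot\phi$, feed it to $c$, and specialize at $k$. Packaging it as $\top\evrel{\elambda{(\ecompos{\esnd}{\emem})}}\lnot\lnot\phi\imp\phi$ from $\lnot\lnot\phi\evrel{\emem}\phi$ is equivalent to the paper's constant code $e$ with $e\cdot c=\eta(\emem)$ plus After-Return. It needs only the frame's Transitivity, Conjunction and Implication rules, not anything from Ex.~22.

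The step that fails is the claim in Step~1 that $\bot=\prod\Phi$ is ``essentially empty''. Unfold $\top\imp\Phi$; the instance $\psi=\lambda r.\,\hbot$ is the strongest. This gives $\bot(c)=\htop$ iff $(c\cdot a)\orthog k$ for all $a$ and all $k$. The code $\ethrow{k_0}$, with $k_0$ constantly $0$, satisfies this, since $(\ethrow{k_0}\cdot a)(k)=k_0(a)=0\in\pole$. So $\macps{r}{m}{\bot(r)}=\htop$ only yields $m\orthog k'$ for continuations $k'$ that succeed on every realizer of $\bot$. Your $k$ is only assumed to succeed on realizers of $\phi$, so ``success with any continuation, in particular $k$'' is unjustified, and in fact false. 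Take $\phi:=\lambda r.\,\hbot$ and a code $c$ with $c\cdot e=\eta(\ethrow{k_0})$ for all $e$; it exists by combinatory completeness, just as the paper's constant code does. Then $(\lnot\phi)(e)=\htop$ for every $e$, so $(\lnot\lnot\phi)(c)=\macps{r}{\eta(\ethrow{k_0})}{\bot(r)}=\htop$ by After-Return. But $\macps{r}{\emem\cdot c}{\phi(r)}$ demands $(\emem\cdot c)\orthog k$ for every $k$, and for $k$ constantly $1$ we get $(\emem\cdot c)(k)=(c\cdot\ethrow{k})(k)=k(\ethrow{k_0})=1\notin\pole$. Hence $\lnot\lnot\phi\evrel{\emem}\phi$ fails, and $\elambda{(\ecompos{\esnd}{\emem})}$ does not evidence the lemma.

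The paper's proof makes the same identification: it rewrites $\macps{r}{m}{\bot(r)}$ as $\bigmeet_{k'}m\orthog k'$. Its intermediate claim $(\lnot\lnot\phi\imp\phi)(\emem)=\htop$ fails on the same example. So you have not departed from the paper, but both arguments need a repair. Put $\bot_0:=\lambda r.\,\hbot$ and $\lnot_0\psi:=\psi\imp\bot_0$. Trivially $\bot_0\evrel{\eid}\bot$, and $\bot\evrel{\ecompos{\epair{\eid}{\etrue}}{\eeval}}\bot_0$ because $\bot=\top\imp\Phi$. Using these under $\lambda$ and $\eeval$, one gets $\lnot\lnot\phi\evrel{e'}\lnot_0\lnot_0\phi$ for some $e'\in\setpl$. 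For $\bot_0$ your identification is correct, since $\macps{r}{m}{\hbot}=\bigmeet_{k'}m\orthog k'$. So your Step~2 computation, run verbatim with $\bot_0$ in place of $\bot$, proves $\lnot_0\lnot_0\phi\evrel{\emem}\phi$. Then $\elambda{(\ecompos{\esnd}{\ecompos{e'}{\emem}})}$ is a proof-like evidence of $\lnot\lnot\phi\imp\phi$.
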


In particular, Lemma \ref{lem:dne} that entails the evidenceability of the double negation elimination plays a central role in the following theorem.

\begin{theorem}
    \label{thm:cps_dnt_isomorphism}
    There exists an isomorphism $\Sheaf_\dnt(\toposEff) \cong \toposE_\cps$.
\end{theorem}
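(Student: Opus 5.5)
The plan is to build an explicit isomorphism of categories that is the identity on underlying sets and equality predicates. Since $\EFkleene$ and $\EF_\cps$ share the proposition set $\effomega$, a pair $(X,\sim)$ makes sense in both toposes. I would use Theorem \ref{thm:main_result} to describe the objects of $\Sheaf_\dnt(\toposEff)$ as those $(X,\sim)$ for which $\psep(X)$ and $\pdsc(X)$ are evidenceable in $\EFkleene$. Then I would compare them with the objects of $\toposE_\cps$. The guiding idea is that evidencing $\phi$ in $\EF_\cps$ corresponds to evidencing $\dnt\phi$ in $\EFkleene$. One direction comes from Lemma \ref{lem:k1_cps_extension} together with Lemma \ref{lem:dne}. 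For the converse, I would show that a proof-like code evidencing $\phi$ in $\EF_\cps$ yields a $\kleenek$-code evidencing $\dnt\phi$. This goes through the usual CPS/Krivine translation: running a proof-like term against a $\kleenek$-continuation that would refute $\phi$ gives a refutation of $\lnot\phi$. I would first record this as an auxiliary lemma, uniform in parameters: $\top\evrel{}\phi$ in $\EF_\cps$ iff $\top\evrel{}\dnt\phi$ in $\EFkleene$, and more generally for $\prod$-closed statements.

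Given the auxiliary lemma, I would treat objects and morphisms in turn. I would send a $\dnt$-sheaf $(X,\sim)$ to $(X,\sim)$ in $\toposE_\cps$. Symmetry and transitivity transfer by Lemma \ref{lem:k1_cps_extension}. For the inverse direction, an object $(X,\sim)$ of $\toposE_\cps$ is not directly a sheaf, so I would pick canonical representatives instead. Sheafification for $\dnt$ replaces $\sim$ by $\dnt(x\sim y)$. So I would normalise objects of $\toposE_\cps$ to those whose equality is $\dnt$-stable; by Lemma \ref{lem:dne}, every object there is isomorphic to such a normalised one. With $\sim$ $\dnt$-stable, $\psep$ becomes trivial in $\EFkleene$.

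The descent condition $\pdsc$ is the crucial point, and I expect it to be the main obstacle. It requires a $\kleenek$-realizer selecting an $a$ from a $\dnt$-singleton, and classically such a choice is available only because $\dnt$-propositions in $\toposEff$ are essentially the two truth values. I would use the known fact that $\dnt(\phi)$ is evidenceable exactly when $\phi$ is inhabited, together with choice in $\CSet$, to pick $a_x$. I would then check that the resulting realizers are uniform (the constant, realizer-free nature of $\dnt$-statements makes this possible).

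For morphisms, I would show that functional predicates $F$ between sheaves in $\toposEff$ and functional predicates in $\toposE_\cps$ coincide up to $\peq$. Here $\pext$, $\psv$ and $\peq$ are universal statements whose conclusions are $\dnt$-stable, so they transfer by the auxiliary lemma. $\ptot$ involves $\coprod$. I would handle it by replacing $F$ with $\dnt F$, and by using that in the sheaf topos totality only needs to hold in the $\dnt$-world thanks to $\pdsc$. Finally I would check that composition and identities, which are defined through $\coprod$ and $\land$ in the appendix, are preserved. This should follow from the same translation, and it yields an isomorphism rather than a mere equivalence because the object and morphism assignments are literally mutually inverse on the normalised representatives.
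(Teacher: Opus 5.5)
Your plan cannot deliver the stated isomorphism. Once you replace objects of $\toposE_\cps$ by normalised representatives, the assignment $(X,\sim)\mapsto(X,\dnt\sim)$ sends both $(X,\sim)$ and $(X,\dnt\sim)$ to the same sheaf, so it is not injective on objects. Being ``mutually inverse on the normalised representatives'' therefore only shows that $\Sheaf_\dnt(\toposEff)$ is isomorphic to the full subcategory of normalised objects, i.e.\ equivalent to $\toposE_\cps$.

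The paper gets an isomorphism by using the identity on pairs $(A,\sim_A)$ in both directions. Lemma~\ref{lem:k1_cps_extension} turns sheaves into objects of $\toposE_\cps$. Conversely, $\psep(A)$ comes from Lemma~\ref{lem:dne}, and $\pdsc(A)$ from double-negation elimination: it gives $\pden(m)\efequiv\prod_x(\rsex(x)\imp\chi_m(x))$, so every $\dnt$-dense mono is invertible and $g\circ m^{-1}$ is the extension. Your remark that an object of $\toposE_\cps$ is ``not directly a sheaf'' rules out that identity-on-objects strategy. If you keep it, you must either settle for an equivalence or give a separate argument for a bijection on objects.

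More seriously, the whole plan rests on the uniform auxiliary lemma, and your justification for it does not go through. In $\EFkleene$ the proposition $\bot$ is the empty predicate, so $\lnot\phi$ and $\dnt\phi$ are constant (all of $\Natural$ or empty). Hence $\prod_x\dnt\phi(x)$ is evidenceable in $\EFkleene$ exactly when each $\phi(x)$ is inhabited, with no uniformity at all.
\begin{itemize}
\item The direction from $\EF_\cps$ to $\EFkleene$ is therefore just consistency of $\EF_\cps$ (no proof-like code evidences an empty predicate). There is no $\kleenek$-continuation to run a proof-like term against.
\item The other direction carries the real content, and Lemmas~\ref{lem:k1_cps_extension} and~\ref{lem:dne} do not give it. Lemma~\ref{lem:dne} concerns the double negation computed with the CPS modality; its proof realizes $\lnot\phi$ by the throw code $\ethrow{k}$, which exists only on the CPS side. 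What Lemma~\ref{lem:k1_cps_extension} transports is instead $\EFkleene$'s realizer-free $\dnt\phi$.
\item Chaining the two lemmas would require a single proof-like code evidencing $\prod_x\phi(x)$ in $\EF_\cps$ from the bare fact that every $\phi(x)$ is nonempty. For example, take $\phi(x)=\{a_x\}$ with $x\mapsto a_x$ non-computable; nothing you cite provides such a code.
\end{itemize}

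Your normalisation isomorphism $(X,\sim)\cong(X,\dnt\sim)$ in $\toposE_\cps$ and your transfer of $\pext$, $\psv$, $\peq$ all depend on exactly this uniform step. So the crux is here, not in descent: once equalities are two-valued, $\psep$ and $\pdsc$ reduce to classical facts about objects with two-valued equality.

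The paper's step ``$\psep(A)$ is evidenceable by Lemma~\ref{lem:dne}'' silently makes the same identification of $\EF_\cps$-evidence with $\EFkleene$-evidence, through its remark that $\toposE_\cps$ is a subtopos of $\toposEff$. Your plan makes that dependence explicit but does not discharge it.
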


\begin{proof}
    If $(A, \sim_A)$ is an object of $\Sheaf_{\dnt}(\toposEff)$, then it is also one of $\toposE_\cps$ because $\sim_A$ makes $\psym$ and $\ptrs$ be evidenceable in $\EF_\cps$ by Lemma \ref{lem:k1_cps_extension}.

    Conversely, suppose that $(A, \sim_A)$ is an object of $\toposE_\cps$.
    Note that $\toposE_\cps$ is a subtopos of $\toposEff$.
    First, $\psep(A)$ is evidenceable by Lemma \ref{lem:dne}.
    For the evidenceability of $\pdsc(A)$, let $m \colon S \hookrightarrow X$ be a $\dnt$-dense subobject and $g \colon S \to A$ be a morphism.
    Since we have $\pden(m) \efequiv \prod_x (\rsex(x) \imp \chi_m(x))$, $S$ is isomorphic to $X$.
    By taking the existent $g \circ m^{-1}(x) \in A$ for each $x \in X$, we have an evidence of $\pdsc(A)$. Hence, $B$ is a $\dnt$-sheaf and it is an object of $\Sheaf_\dnt(\toposEff)$.
    The case for morphisms is similar.
    \qed
\end{proof}

%% file: section/conclusion.tex
\section{Conclusion and Future Work}
\label{sec:conclusion}

\newcommand{\multicolon}{\mathbin{\colon\!\!\!\subseteq}}

This paper formulated the internal structure of effectful toposes. These are realizability toposes extended via evidenced frames to accommodate effectful computation.
Moreover, we demonstrated that the standard effective topos can be regarded as a specific instance of this framework.
We also explicitly characterized the conditions satisfied by the sheaves induced by Lawvere-Tierney topologies on effectful toposes.

We anticipate that further analysis will reveal a deeper correspondence between computational effects and topologies like Section \ref{ssec:cpsca_vs_dnt_topos}.
For instance, let $\toposE_M$ denote the effectful topos constructed from an MCA over a monad $M$. Then, one might identify the topology $j$ that yields the equivalence $\toposE_M \simeq \Sheaf_j(\toposEff)$, and vice versa.

Furthermore, by formulating an oracle as a \emph{bilayer function} $g \multicolon \Natural \times \Lambda \rightrightarrows \Natural$,
we can derive a corresponding topology $g^{\Game\to} \colon \boolhyt^{\kleenek} \to \boolhyt^{\kleenek}$ on $\toposEff$ \cite{kihara_lawvere-tierney_2023}.
Clarifying the conditions on $g$ and $M$ making $\Sheaf_{g^{\Game\to}}(\toposEff) \simeq \toposE_M$ would establish a unified connection between oracle realizability and effectful realizability.
Naturally, they provide models for intuitionistic logic, classical logic, and many intermediate logics that lie between them.

While our discussion here has focused exclusively on MCAs, the analysis of effectful toposes induced by syntactic systems 
(e.g., $\EffHOL$ \cite{cohen_syntactic_2025}, an extension of Girard's $F_\omega$ \cite{girard_interpretation_1972}) remains an important subject for future work.
In recent years, the formalization of ordinary mathematics using proof assistants such as Lean \cite{noauthor_lean_nodate} and Rocq \cite{noauthor_welcome_2025} has been gaining significant momentum.
Introducing ``effectful proofs'' into these systems could provide simpler or novel methods for proving theorems within intermediate logics.

%% file: section/appendix.tex
\section{$\eftopos$ Construction}
\label{sec:appendix_EFT}

\begin{description}
\item[Composition:]
Let $F \colon (X, \sim) \to (Y, \sim)$ and $G \colon (Y, \sim) \to (Z, \sim)$ be functional predicates.
The composition $G \circ F \colon (X, \sim) \to (Z, \sim)$ is given by
\begin{math}
    G \circ F(x, z) := \coprod_{y \in Y} (\rsex(y) \land F(x, y) \land G(y, z)).
\end{math}
We define the composition of morphisms $\eqclass{F}$ and $\eqclass{G}$ by $\eqclass{G} \circ \eqclass{F} := \eqclass{G \circ F}$;
it is easy to verify that this is well-defined and associative.
If functions $f$ and $g$ represent $\eqclass{F}$ and $\eqclass{G}$, respectively,
then $g \circ f$ represents $\eqclass{G} \circ \eqclass{F}$.

\item[Identity:]
The identity of an object $(X, \sim)$ is the morphism represented by $\id_X \colon X \to X$.
\item[Finite products:]
    The product $(X, \sim) \times (Y, \sim)$ of objects $(X, \sim)$ and $(Y, \sim)$ is defined by
    $(X \times Y, \sim)$ where $(x, y) \sim (x', y') := x \sim x' \land y \sim y'$, and their projections are straightforwardly defined.
    Given two functional predicates $F \colon (W, \sim) \to (X, \sim)$ and $G \colon (W, \sim) \to (Y, \sim)$,
    one can define $\encode{F, G}$ on $W \times (X \times Y) \to \Phi$ by
    \begin{math}
        \encode{F, G}(w, (x, y)) := F(w, x) \land G(w, y).
    \end{math}
    Then, $\encode{F, G}$ is a functional predicate $(W, \sim) \to (X \times Y, \sim)$, and this represents a morphism $\encode{[F], [G]} \colon (W, \sim) \to (X \times Y, \sim)$.

\item[Equalizers:] Let $[F], [G] \colon (X, \sim) \to (Y, \sim)$ be morphisms. Their equalizer is the object $(X, \approx)$ with the morphism $i \colon (X, \approx) \to (X, \sim)$, where
    \begin{math}
        x \approx x' := x \sim x' \land \coprod_{y \in Y} (F(x, y) \land G(x, y))
    \end{math}
    and $i$ is represented by $\id_X$.
\end{description}

\section{Proof of Theorem \ref{thm:main_result}}
\label{sec:appendix_proof_theorem}

This appendix provides a proof of Theorem \ref{thm:main_result}. We begin with the following lemmata.

\begin{lemma}
    \label{prop:morphisms_lefteq_condition}
    Consider an arbitrary effectful topos.
    Let $m \colon (S, \sim) \hookrightarrow (X, \sim)$ be a subobject and $f, g \colon (X, \sim) \to (Y, \sim)$ be morphisms.
    Then, $f \circ m = g \circ m$ holds if and only if
    \begin{math}
        \label{eq:proposition_lefteq}
        \pleq(f, g) := \prod_{x \in X}(x \sim_m x \imp f(x) \sim g(x))
    \end{math}
    is evidenceable.
\end{lemma}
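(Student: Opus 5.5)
We prove the two directions separately. Throughout, $m$, $f$ and $g$ are represented by extensional functions, which we denote by the same letters; this is possible by the remark in Construction~\ref{const:EFT}. By Remark~\ref{remark:subobject} we may replace $m$ by the canonical inclusion $i_m \colon (X, \sim_m) \hookrightarrow (X, \sim)$, represented by $\id_X$, where $x \sim_m x' = x \sim x' \land \chi_m(x)$. This replacement is harmless because precomposition with the isomorphism $m'$ is a bijection on hom-sets, so $f \circ m = g \circ m$ iff $f \circ i_m = g \circ i_m$. The morphism $f \circ i_m$ is then represented by the function $f$ viewed as a map $(X, \sim_m) \to (Y, \sim)$, and likewise for $g$.

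So the claim reduces to the following: $\lift{f}$ and $\lift{g}$ are equivalent functional predicates $(X,\sim_m) \to (Y,\sim)$ iff $\pleq(f,g)$ is evidenceable. Unfolding the definition, equivalence means that
\begin{equation*}
  \prod_{x \in X}\prod_{y \in Y}\bigl(x \sim_m x \land \rsex(y) \imp (f(x) \sim y \efequiv g(x) \sim y)\bigr)
\end{equation*}
is evidenceable.

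($\Leftarrow$) Suppose $\pleq(f,g)$ is evidenced. Given $x \sim_m x$, $\rsex(y)$ and $f(x) \sim y$, the evidence of $\pleq$ yields $f(x) \sim g(x)$. Symmetry $\psym$ gives $g(x) \sim f(x)$, and transitivity $\ptrs$ then gives $g(x) \sim y$. The converse direction, from $g(x) \sim y$ to $f(x) \sim y$, is obtained the same way using $f(x) \sim g(x)$ and transitivity. Each step is an instance of an evidenced $\prod$-statement followed by evaluation, so by the deduction-theorem remark the whole argument assembles uniformly in $x$ and $y$ into a single evidence built from $\lambda$, pairing, projections, $\eeval$ and the given evidences.

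($\Rightarrow$) Suppose the equivalence predicate is evidenced, and fix $x$ with $x \sim_m x$. Instantiate $y := f(x)$. We need $\rsex(f(x))$ and $f(x) \sim f(x)$, and both follow from $x \sim x$, which is the first conjunct of $x \sim_m x$, together with $\pfext(f)$. The equivalence then turns $f(x) \sim f(x)$ into $g(x) \sim f(x)$, and $\psym$ gives $f(x) \sim g(x)$.

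The only step needing care is uniformity. The instantiation $y := f(x)$ depends on $x$, but evidences for $\prod$ are uniform, and $\eeval$ works for every member of the indexed family. A single evidence, independent of $x$, therefore witnesses the whole product. Apart from this check, which is the main point requiring attention, the argument is routine bookkeeping with the evidenced-frame constructs of Table~\ref{tab:ef_definition}.
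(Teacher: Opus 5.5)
Your proposal is correct and follows the paper's proof essentially step for step. Both arguments pass from $m$ to the canonical inclusion $i_m$, represent $f \circ i_m$ and $g \circ i_m$ by the functions $f, g$ on $(X, \sim_m)$, and unfold $\peq(\lift{f \circ i_m}, \lift{g \circ i_m})$; then $y := f(x)$ gives $\pleq(f,g)$ in one direction, and the converse recovers $\peq$ from $\pleq(f,g)$. The paper leaves implicit three points you make explicit: that $\rsex(f(x))$ follows from $x \sim x$ via $\pfext(f)$, that the converse uses $\psym$ and $\ptrs$, and that the resulting evidence is uniform in $x$ and $y$.
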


\begin{proof}
    Assume $f \circ m = g \circ m$. Note that this equals $f \circ i_m = g \circ i_m$. 
    Since $i_m$ is represented by $\id_X$, the compositions $f \circ i_m$ and $g \circ i_m$ are respectively represented by extensional functions $f, g \colon X \to Y$.
    That is,
    \begin{equation*}
        \peq(\lift{f \circ i_m}, \lift{g \circ i_m}) = \prod_{x \in X} \prod_{y \in Y} (x \sim_m x \land y \sim y \imp (f(x) \sim y \efequiv g(x) \sim y))
    \end{equation*}
    is evidenceable.
    As $y$ is arbitrary, one obtains an evidence of $\pleq(f, g)$ by taking $y = f(x)$.
    We can similarly derive $\peq(\lift{f \circ i_m}, \lift{g \circ i_m})$ from $\pleq(f, g)$.
    \qed
\end{proof}

\begin{lemma}
    \label{lem:appendix_1}
    Consider the effective topos.
    For an extensional function $j$, a subobject $m \colon S \hookrightarrow X$ is $j$-dense if and only if
    \begin{math}
        \pden(m) := \prod_{x} (\rsex(x) \imp j(\chi_m(x)))
    \end{math}
    is evidenceable.
\end{lemma}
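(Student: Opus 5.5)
The plan is to reduce density to a comparison of classifying morphisms and then unfold what it means, in the effective topos, for two morphisms into $(\effomega, \efequivop)$ to coincide. By Definition \ref{dfn:topos_closure}, $m$ is $j$-dense exactly when the closure $\overline{S}$ is the whole of $X$. This happens exactly when $\chi_{\overline m} = j \circ \chi_m$ classifies the maximal subobject $\id_X$. Since $\id_X$ is the pullback of $\true$ along $\true \circ {!}_X$, the uniqueness of classifying morphisms in Construction \ref{const:EFT} turns density into the equation $j \circ \chi_m = \true \circ {!}_X$ of morphisms $X \to \effomega$.

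Next we represent both sides by extensional functions. As a function, $\chi_m$ can be taken to be $\chi_m \colon X \to \effomega$, and $j$ is extensional by hypothesis, so $j \circ \chi_m$ is represented by the composite function. The morphism $\true \circ {!}_X$ is represented by the constant function $x \mapsto \top$. We then apply Lemma \ref{prop:morphisms_lefteq_condition} with $m = \id_X$, so that $x \sim_m x$ is $\rsex(x)$ up to an evidenced equivalence, noting that $\chi_{\id}$ is $\top$. The equation therefore holds iff
\begin{equation*}
    \prod_{x \in X} \qty\Big(\rsex(x) \imp (j(\chi_m(x)) \efequiv \top))
\end{equation*}
is evidenceable.

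Finally we compare this with $\pden(m)$. The implication $\top \imp j(\chi_m(x))$ is always evidenced uniformly (e.g.\ by $\eeval$ composed with a pairing with $\etrue$), and $\phi \imp \top$ is evidenced by a $\lambda$-abstraction of $\etrue$. Hence $j(\chi_m(x)) \efequiv \top$ and $j(\chi_m(x))$ are inter-evidenced by codes independent of $x$. Using $\lambda$ and $\eeval$ under the $\prod$, this uniformity transports evidence between the displayed proposition and $\pden(m)$ in both directions, which proves the lemma.

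The main obstacle is the bookkeeping of representatives. We must check that the canonical representative of $\chi_m$ from Remark \ref{remark:subobject} is extensional for $\efequivop$, so that composing with $j$ really represents the composite morphism. We must also check that the uniformity of evidences in $x$ survives. Everything else follows routinely from the evidenced-frame constructs.
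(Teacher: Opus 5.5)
Your proof is correct, but it takes a different route from the paper's. The paper works with the closure directly. It identifies $\overline S$ with $(X,\sim_{jm})$, where $x\sim_{jm}x' := (x\sim x')\land j(\chi_m(x))$, whose inclusion into $(X,\sim)$ is represented by $\id_X$. Density then means this inclusion is invertible, i.e.\ $\id_X$ also represents a morphism $(X,\sim)\to(X,\sim_{jm})$. The extensionality condition for that morphism is $\pden(m)$ up to symmetry and transitivity.

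You instead use the general topos fact that $m$ is $j$-dense iff $j\circ\chi_m=\true\circ{!}_X$. This follows from uniqueness of classifying maps, since the maximal subobject is classified by $\true\circ{!}_X$. You then unfold equality of two morphisms into $(\effomega,\efequivop)$ via Lemma \ref{prop:morphisms_lefteq_condition} with $m=\id_X$.

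Your version is more modular. It also avoids a step the paper leaves implicit: that an inclusion represented by $\id_X$ is an isomorphism exactly when $\id_X$ also represents a morphism in the reverse direction. The cost is the extra comparison between $j(\chi_m(x))\efequiv\top$ and $j(\chi_m(x))$. The paper's route reads off $\pden(m)$ directly as an extensionality condition, and gives an explicit description of $\overline S$ in the style of Remark \ref{remark:subobject}.

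One phrasing fix: the proposition $\top\imp j(\chi_m(x))$ is not itself always evidenced. What is uniform is the pair of conversions $(\top\imp\psi)\evrel{\ecompos{\epair{\eid}{\etrue}}{\eeval}}\psi$ and $\psi\evrel{\elambda{\efst}}(\top\imp\psi)$. That is what your ``inter-evidenced by codes independent of $x$'' claim actually needs, and the second conversion should be stated explicitly.
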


\begin{proof}
    One has an isomorphism $(\overline S, \sim) \cong (X, \sim_{jm})$ where $x \sim_{jm} x' := (x \sim x') \land j(\chi_m(x))$
    and the inclusion $i_{jm} \colon (X, \sim_{jm}) \hookrightarrow (X, \sim)$ by the same argument as in Remark \ref{remark:subobject}.
    If $m$ is dense, then $i_{jm}$ must be isomorphic, that is, the identity function $\id_X$ also represents the isomorphism $(X, \sim) \to (X, \sim_{jm})$.
    This immediately implies that $\pden(m)$ is evidenceable and vice versa. \qed
\end{proof}

We first prove the part (\ref{enumerate:main_result_1}) of the theorem. Again, $j$ is a Lawvere-Tierney topology and $(A, \sim_A)$ is an object of $\toposEff$.

\begin{proof}[Theorem \ref{thm:main_result} (\ref{enumerate:main_result_1})]
    Assume that $A$ is $j$-separated and consider the object $(A^2, \sim_{jA})$ where $(a, b) \sim_{jA} (a', b') := a \sim_A a' \land b \sim_A b' \land j(a \sim_A b)$.
    Let $m \colon S \hookrightarrow A^2$ be the subobject defined as $\chi_m(a, b) := a \sim_A b$.
    Note that $m$ is dense because $(a, b) \sim_{jA} (a, b)$ implies $j(a \sim_A b) = j(\chi_m(a, b))$ by definition.
    For all $f, g \colon (A^2, \sim_{jA}) \to (A, \sim_A)$, if $f \circ m = g \circ m$ holds, we obtain $f = g$ by assumption.
    In particular, we take $f := \pi_1$ and $g := \pi_2$ where $\pi_1$ and $\pi_2$ are the projections. Then, $\pleq(f, g)$ in Lemma \ref{prop:morphisms_lefteq_condition} is evidenceable.
    In this setting, $f \circ m = g \circ m$ holds and so does $f = g$. Therefore, we get an evidence of
    \begin{align*}
        \peq({\lift{f}}, {\lift{g}}) &= \prod_{(a, b) \in A^2} \prod_{a' \in A} (\rsex((a, b), a') \imp (\pi_1(a, b) \sim_A a' \efequiv \pi_2(a, b) \sim_A a')) \\
        &= \prod_{(a, b) \in A^2} \prod_{a' \in A} (\rsex(a, b, a') \land j(a \sim_A b) \imp (a \sim_A a' \efequiv b \sim_A a')) \\
          &\efequiv \prod_{a \in A} \prod_{b \in A} (\rsex(a, b) \land j(a \sim_A b) \imp a \sim_A b) = \psep(A).
    \end{align*}
    For the converse, suppose that $\psep(A)$ is evidenceable and
    let $m \colon S \hookrightarrow X$ be a dense subobject. Then, $\pden(m)$ is evidenceable by Lemma \ref{lem:appendix_1}. 
    If morphisms $f, g \colon X \to A$ satisfy $m_A^*(f) = m_A^*(g)$, i.e., $f \circ i_m = g \circ i_m$, then
    \begin{align*}
        \pleq(f, g) &= \prod_{x \in X}(x \sim_m x \imp f(x) \sim_A g(x)) = \prod_{x \in X}(\rsex(x) \land \chi_m(x) \imp f(x) \sim_A g(x))
    \end{align*}
    is evidenceable. By $\pinc,\, \pprs,\, \psep(A)$ and $\pden(m)$, so is
    \begin{math}
        \prod_{x}(\rsex(x) \imp f(x) \sim_A g(x)).
    \end{math}
    Therefore, $\peq(\lift f, \lift g)$ is evidenceable and $f = g$ holds. \qed
\end{proof}

For the part (\ref{enumerate:main_result_2}) of the theorem, we need one more lemma.

\begin{lemma}
    \label{lem:appendix_4}
    $\prod_\phi \prod_\psi(j(\phi \imp \psi) \imp (j(\phi) \imp j(\psi)))$ is evidenceable.
\end{lemma}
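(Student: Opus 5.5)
The plan is to derive the statement purely inside the evidenced frame $\EFkleene$ from $\pinc$, $\pidm$, $\pprs$ and the extensionality of $j$. Every evidence produced must be independent of $\phi$ and $\psi$, so that a single evidence realizes the whole product $\prod_\phi\prod_\psi$. By the universal implication construct (currying with $\lambda$), it suffices to find one evidence $e$ with
\begin{equation*}
    j(\phi \imp \psi) \land j(\phi) \evrel{e} j(\psi)
\end{equation*}
for all $\phi, \psi$. Then $\lambda$ applied twice, together with the usual manipulation of $\top \land -$, yields an evidence of the product.

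The first step is to merge the two hypotheses. The backward half of $\pprs$, instantiated at $\phi \imp \psi$ and $\phi$, gives a uniform evidence
\begin{equation*}
    j(\phi \imp \psi) \land j(\phi) \evrel{} j((\phi \imp \psi) \land \phi).
\end{equation*}
The remaining task is to push $\eeval \colon (\phi \imp \psi) \land \phi \evrel{} \psi$ under $j$, i.e.\ to establish monotonicity of $j$ with uniform evidence.

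For monotonicity I would use that $j$ is extensional as a function on $(\effomega, \efequivop)$, as the paper remarks after Definition \ref{dfn:topos_closure}. That is, $\pfext(j) = \prod_{\alpha}\prod_{\beta}((\alpha \efequiv \beta) \imp (j(\alpha) \efequiv j(\beta)))$ is evidenceable. Put $\alpha := (\phi \imp \psi) \land \phi$ and $\beta := \alpha \land \psi$. The equivalence $\alpha \efequiv \beta$ has a uniform evidence: forwards it is $\epair{\eid}{\eeval}$, backwards it is $\efst$, and these are packaged into a proof of $\top \imp \ldots$ via $\lambda$. Applying the evidence of $\pfext(j)$ by the deduction principle of the Remark then gives $j(\alpha) \evrel{} j(\beta)$ uniformly. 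The forward half of $\pprs$ gives $j(\beta) \evrel{} j(\alpha) \land j(\psi)$, and $\esnd$ then gives $j(\psi)$. Composing with $;$ the chain
\begin{equation*}
    j(\phi \imp \psi) \land j(\phi) \evrel{} j(\alpha) \evrel{} j(\alpha \land \psi) \evrel{} j(\alpha) \land j(\psi) \evrel{\esnd} j(\psi)
\end{equation*}
produces the required $e$.

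The main obstacle is this monotonicity step, specifically justifying that extensionality of $j$ is available with evidence that is uniform in $\phi,\psi$. $\pfext(j)$ is a single evidenceable proposition quantified over all pairs, so instantiation is uniform. The only subtlety is converting an evidence of $\alpha \efequiv \beta$ (an evidenced proposition) into the argument of the evidence of $\pfext(j)$. This is done by the eval/pairing pattern in the Remark.

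If one prefers not to rely on extensionality, a fallback is available. Definition \ref{dfn:LT_topology} alone only gives $\pprs$, which is a statement about $j(\alpha \land \psi)$ versus $j(\alpha)\land j(\psi)$, not about $j(\alpha)$ versus $j(\alpha \land \psi)$. The fallback is therefore to prove extensionality of any $j$ satisfying Definition \ref{dfn:LT_topology} separately, as a small lemma, or to observe that it is part of $j$ being a morphism of $\toposEff$.
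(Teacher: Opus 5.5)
Your proposal is correct and follows the paper's route: merge the two hypotheses into $j((\phi \imp \psi) \land \phi)$ with the backward half of $\pprs$, then push $\eeval$ under $j$. The paper's one-line proof leaves that last step implicit under ``by $\pprs$'' and reasons by assuming the antecedents are evidenceable; you instead make the monotonicity step explicit and keep every evidence uniform in $\phi, \psi$, which is exactly what evidenceability of the product requires. Your monotonicity step uses extensionality of $j$ (which the paper asserts right after the definition of density) together with the forward half of $\pprs$ at $\alpha \land \psi$.
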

\begin{proof}
    Take propositions $\phi, \psi \in \effomega$ arbitrarily, and assume that $j(\phi \imp \psi)$ and $j(\phi)$ are evidenceable.
    Then, we have an evidence of $j(\phi \imp \psi) \land j(\psi)$, and $j((\phi \imp \psi) \land \phi) \efequiv j(\psi)$ is also evidenceable by $\pprs$. \qed
\end{proof}

\begin{proof}[Theorem \ref{thm:main_result} (\ref{enumerate:main_result_2})]
    Suppose that $A$ is a $j$-sheaf.
    Let $g \colon S \to A$ be a morphism from a $j$-dense subobject $m \colon S \hookrightarrow X$, and let $\psi_x$ be the $j$-singleton for them.
    Since $m_A^*$ is bijective, there exists a unique extension $f \colon X \to A$ such that $g = f \circ m$.
    We take $a_x := f(x)$ and show that this witnesses $\pdsc(A)$,
    i.e., that for every existent $b \in A$, both of $\psi_x(b) \imp j(a_x \sim_A b)$ and $j(a_x \sim_A b) \imp \psi_x(b)$ are evidenceable.
    \begin{description}
        \item[$\psi_x(b) \imp j(a_x \sim_A b)$:] Assume that $\psi_x(b) = j\qty(\coprod_s (\rsex(s) \land m(s) \sim x \land g(s) \sim_A b))$ is evidenceable.
            If $m(s) \sim x$ holds for some existent $s$, then $f(m(s)) \sim_A f(x)$ by the extensionality of $f$, which is $g(s) \sim_A a_x$.
            Furthermore, with $g(s) \sim_A b$, $a_x \sim_A b$ is immediate. Thus, $\coprod_s (\rsex(s) \land m(s) \sim x \land g(s) \sim_A b) \imp a_x \sim_A b$ is evidenceable
            and $j(a_x \sim_A b)$ follows by $\pinc$ and Lemma \ref{lem:appendix_4}.
        \item[$j(a_x \sim_A b) \imp \psi_x(b)$:] Assume that $j(a_x \sim_A b) = j(f(x) \sim_A b)$ is evidenceable.
            Since $m$ is dense, $\coprod_s (\rsex(s) \land m(s) \sim x)$ is evidenceable and so is $j(\coprod_s (\rsex(s) \land m(s) \sim x))$ by $\pinc$.
            Combining this with $j(f(x) \sim_A b)$ and the fact $f(m(s)) \sim g(s)$, we have $j(\coprod_s (\rsex(s) \land m(s) \sim x \land f(x) \sim_A b))$,
            which is exactly $\psi_x(b)$.
    \end{description}

    Conversely, suppose that $A$ is $j$-separated and that $\pdsc(A)$ is evidenceable.
    We show that for every $j$-dense subobject $m \colon S \hookrightarrow X$ and every morphism $g \colon S \to A$,
    there exists a unique morphism $f \colon X \to A$ satisfying $g = f \circ m$.
    Let $\psi_x$ be the singleton for $m$ and $g$.
    By $\pdsc(A)$, for each $x \in X$,
    we can select an existent $a_x \in A$ making $\prod_b (\rsex(b) \imp (\psi_x(b) \efequiv j(a_x \sim_A b)))$ be evidenceable. Then, we define $f(x) := a_x$.
    For the final step, we need to give an evidence of
    \begin{align*}
        \peq(\lift{f \circ m}, \lift g) &= \prod_{s \in S} \prod_{b \in A} (\rsex(s, b) \imp (a_{m(s)} \sim_A b \efequiv g(s) \sim_A b)) \\
                                        &\efequiv \prod_{s \in S} (\rsex(s) \imp (a_{m(s)} \sim_A g(s))).
    \end{align*}
    For any existent $s \in S$, by $\rsex(g(s))$ we obtain
    \begin{align*}
        \psi_{m(s)}(g(s)) &= j\qty(\coprod_{t \in S}(\rsex(t) \imp m(s) \sim m(t) \land g(s) \sim_A g(t))) \\
                          &\efequiv j(a_{m(s)} \sim_A g(s)).
    \end{align*}
    Since the LHS is evidenceable by taking $t = s$, so is the RHS. Finally, $a_{m(s)} \sim_A g(s)$ is evidenceable by $\psep(A)$. \qed
\end{proof}

\section{Proofs for Section \ref{ssec:cpsca_vs_dnt_topos}}
\label{sec:appendix_proof_cpsca_dnt}

Prior to the proof, we state one of the axioms of the modality \cite[Def. 13]{cohen_partial_2025}.
The \emph{After-Return condition} requires that
$\phi(a) \le \modalangle{x}{\eta(a)}{\phi(x)}$ holds for any $\phi\colon A \to \Omega$ and any $a \in A$.

\begin{proof}[Lemma \ref{lem:k1_cps_extension}]
    Let $e$ be an evidence of $\phi$ in $\EF_\kleenek$ and $c \in \Natural$ be any code. Then, we have
    \begin{math}
        \htop \le \makleene{r}{\qty{\varphi_e(c)}}\phi(r)
    \end{math}
    and this immediately implies that $\varphi_e(c)$ is defined and that $\phi(\varphi_e(c)) = \htop$ holds.
    By the After-Return condition, we obtain
    \begin{math}
        \htop \le \macps{r}{[\varphi_e(c)]}\phi(r) = \macps{r}{e \cdot c}\phi(r).
    \end{math}
    The combinatory completeness of $\setpl$ ensures that this $e$ can be replaced by some $e' \in \setpl$.
    Hense, $\phi$ is evidenced by $e'$ in $\EF_\cps$.
    \qed
\end{proof}

\begin{proof}[Lemma \ref{lem:dne}]
    By the combinatory completeness of MCAs, there is a code $e$ satisfying $e \cdot c = \eta(\emem)$ for any code $c$ \cite[Def.\ 5]{cohen_partial_2025}. 
    We take $e$ as the evidence. Then, it suffices to prove $(\dnt\phi \imp \phi)(\emem) = \htop$ by the After-Return condition.
    Our goal is to show that for all $c \in \Natural$, if $\dnt\phi(c) = \htop$, then the following holds:
    \begin{equation}
        \label{eq:lem:dne}
        \macps{r}{\emem \cdot c}{\phi(r)} =
            \bigmeet_{k \in \Natural^{\Natural}}\qty(\bigmeet_{a \in \Natural}(\phi(a) \himp k \orthog a) \himp (\emem \cdot c) \orthog k) = \htop.
    \end{equation}
    Suppose $\dnt\phi(c) = \htop$, i.e.,
    \begin{align*}
        \bigmeet_{e \in \Natural}\qty((\lnot\phi)(e) \himp \macps{r}{c \cdot e} \orthog(r)) 
        = \bigmeet_{e \in \Natural}\qty((\lnot\phi)(e) \himp \bigmeet_{k' \in \Natural^{\Natural}} (c \cdot e) \orthog k') = \htop.
    \end{align*}
    Take an arbitrary continuous $k \colon \Natural \to \Natural$ and assume $\bigmeet_a (\phi(a) \himp k \orthog a) = \htop$.
    Then, it is enough to verify that $(\emem \cdot c) \orthog k$ holds.
    Again, we have the code $\ethrow{k}$ satisfying
    $(\ethrow{k} \cdot a)(k') = k(a)$ for all $a$ and all $k'$. This is an evidence of $\lnot \phi$:
    indeed,
    \begin{align*}
        (\lnot \phi)(\ethrow{k}) = \bigmeet_{a \in \Natural} \qty(\phi(a) \himp \bigmeet_{k' \in \Natural^{\Natural}}(\ethrow{k} \cdot a) \orthog k')
                                 = \bigmeet_{a \in \Natural} \qty(\phi(a) \himp k \orthog a)
    \end{align*}
    holds, and the RHS equals $\htop$ by assumption. Therefore, $\bigmeet_{k'} (c \cdot \ethrow{k})\orthog k' = \htop$.
    Recalling that $(\emem \cdot c)(k) = (c \cdot \ethrow{k})(k)$ holds, we obtain $(\emem \cdot c) \orthog k$ by taking $k' = k$ and Equation (\ref{eq:lem:dne}) follows. \qed
\end{proof}